\newtheorem{theorem}{Theorem}
\newtheorem{lemma}{Lemma}
\begin{document}

\title{Sure independence screening for covariate-dependent extreme value index estimation}

\author{
{\sc Takuma Yoshida}$^{1}$ and {\sc Yuta Umezu}$^{2}$\\
$^{1}${\it Faculty of Data Science, Shiga University, Shiga 522-8522, Japan}\\
{\it E-mail: yoshida@sci.kagoshima-u.ac.jp},\\
$^{2}${\it School of Information and Data Science, Nagasaki University, Nagasaki 852-8131, Japan}\\
{\it E-mail: umezu.yuta@nagasaki-u.ac.jp}
}

\date{\empty}

\maketitle
\begin{abstract}
One of the main topics in extreme value analysis is the estimation of the extreme value index, which characterizes the tail behavior of a distribution.
Although covariate-dependent extreme value index estimation has been widely studied, covariate screening for high-dimensional covariates has not been fully investigated.
This paper proposes a sure independence screening method for covariate-dependent extreme value index estimation.
The proposed method ranks covariates by marginal utilities constructed from a kernel-based conditional Pickands estimator.
Unlike ordinary local smoothing, the proposed screening procedure uses a large-bandwidth kernel regime to obtain stable marginal contrasts.
We establish the sure screening property under this regime, showing that all truly active covariates are retained with probability tending to one.
Simulation studies and a real-data application demonstrate the effectiveness of the proposed method.
\end{abstract}

{\it Keywords:
Extreme value index; Extreme value theory; Pickands estimator; Large-bandwidth kernel; Sure independence screening}

{\it MSC codes: 62G08, 62G20, 62G32}


\section{Introduction}

Predicting the probability of extreme events or the tail behavior of a distribution is important in various fields, including hydrology, climate science, finance, extreme temperature analysis, and biomedical studies. 
Extreme value theory (EVT) provides a fundamental framework for investigating the tail behavior of data, and its theoretical foundations and applications have been summarized in Coles~\cite{cole01}, Beirlant et al.~\cite{beirlant04}, de Haan and Ferreira~\cite{dehaan06}, and Reiss and Thomas~\cite{reiss07}. 
A central quantity in EVT is the extreme value index (EVI), which characterizes the tail behavior of a distribution. 
A positive EVI corresponds to a heavy-tailed distribution, whereas zero and negative EVIs correspond to light-tailed and short-tailed distributions, respectively. 
In particular, a distribution with a negative EVI has a finite endpoint. 
Therefore, estimation of the EVI is one of the main topics in EVT. 
For heavy-tailed distributions, Hill~\cite{hill75} proposed a classical estimator of the EVI. 
When the sign of the EVI is unknown, the Pickands estimator~\cite{pickands75} and the moment estimator (Dekkers et al.~\cite{dekkers89}) are commonly used. 
Maximum likelihood methods based on the generalized extreme value distribution and the generalized Pareto distribution are also standard approaches; see, for example, Coles~\cite{cole01}.

In many applications, tail behavior depends on covariates. 
For example, extreme rainfall may depend on geographical or climatic variables, and the tail behavior of financial losses may depend on market conditions. 
This motivates estimating a covariate-dependent EVI function. 
Several approaches have been developed for this problem. 
Nonparametric estimators of the EVI function were studied by Gardes and Girard~\cite{gardes10a}, Gardes et al.~\cite{gardes10b}, Daouia et al.~\cite{daouia11}, Daouia et al.~\cite{daouia13}, Goegebeur et al.~\cite{goegebeur14}, Gardes and Stupfler~\cite{gardes14} and references therein. 
A linear regression model for the EVI was proposed by Wang and Tsai~\cite{wang09b}. 
Generalized additive modeling of extreme value parameters was discussed by Chavez-Demoulin and Davison~\cite{chavez05} and Youngman~\cite{youngman19}, and single and multi-index models for EVI function estimation were studied by Xu et al.~\cite{xu22} and Yoshida~\cite{yoshida25}.

Most existing methods for covariate-dependent EVI estimation are designed for low-dimensional covariates. 
When the number of covariates is large, direct nonparametric estimation of the EVI function suffers from the curse of dimensionality. 
Moreover, extreme value analysis is intrinsically based on a small number of tail observations, which makes high-dimensional conditional tail estimation particularly unstable. 
Therefore, dimension reduction is essential before constructing a covariate-dependent tail model. 
Recently, Bousebata et al.~\cite{bousebata23} proposed a partial least squares approach for the tail of the response, and Gardes~\cite{gardes18} and Aghbalou et al.~\cite{aghbalou24} developed sufficient dimension reduction methods for extremes. 
These approaches are based on low-dimensional transformations of the covariates. 
Another important direction is covariate screening, whose aim is to identify a reduced set of covariates that are potentially associated with the tail behavior of the response. 
To the best of our knowledge, however, sure independence screening for covariate-dependent EVI estimation has not yet been developed.

In this paper, we propose a sure independence screening (SIS) method for covariate-dependent EVI estimation. 
SIS was introduced by Fan and Lv~\cite{fan08} for high-dimensional linear regression, where marginal correlations were used as screening utilities. 
Since then, many screening methods have been developed for various statistical models. 
Fan and Song~\cite{fan10} studied SIS for generalized linear models, and Fan et al.~\cite{fan11} extended SIS to high-dimensional nonparametric additive models. 
For high-dimensional quantile regression, He et al.~\cite{he13} proposed a screening method based on a nonparametric estimator of marginal utility, and related methods were studied by Wu and Yin~\cite{wu15}, Ma et al.~\cite{ma17}, and Li et al.~\cite{li18}. 
Model-free screening methods were developed by Zhu et al.~\cite{zhu11}, Li et al.~\cite{li12}, and Chen et al.~\cite{chen18}. 
A comprehensive review of screening methods was given by Fan and Lv~\cite{fan18}. 
Although SIS has been widely investigated in mean, quantile, and model-free regression settings, its extension to extreme value analysis remains largely unexplored.

The main difficulty in developing SIS for EVI estimation is that the target parameter is a tail quantity. 
A natural idea is to construct marginal utilities by estimating the conditional EVI function for each covariate separately. 
However, ordinary kernel estimation with a small bandwidth is not suitable for this purpose. 
In the usual nonparametric framework, the bandwidth $h$ is assumed to satisfy $h\to0$ as sample size increases so that a local conditional function can be consistently estimated. 
In contrast, for extreme value screening, such local smoothing can be highly unstable because only a small number of observations are available in the tail and in a local neighborhood of a covariate value. 
This instability may create spurious variation even for inactive covariates and can affect the ranking of marginal utilities. 
Furthermore, our goal is not to estimate the marginal EVI curve itself, but to screen covariates. 
Thus, ordinary small-bandwidth smoothing may not be well suited to this screening problem in extreme value analysis, especially when the sample size is limited.

A key feature of the proposed method is that we use a large-bandwidth kernel construction. 
The marginal utility is defined through a kernel-based conditional Pickands estimator, but the bandwidth is taken in a large-bandwidth regime rather than in the ordinary local-smoothing regime. 
Under this regime, the kernel estimator is not used to estimate a fully nonparametric conditional EVI function at each covariate value. 
Instead, it provides a stable marginal contrast between the tail behavior of the response and each covariate. 
This viewpoint is related to large-bandwidth kernel arguments in Eguchi and Copas~\cite{eguch02} and Penev and Naito~\cite{penev18} for density estimation, and Eguchi et al.~\cite{eguchi03} and Naito and Penev~\cite{naito21} for regression, but the present paper develops this idea for marginal screening in covariate-dependent extreme value analysis. 
We show that the proposed large-bandwidth SIS procedure has the sure screening property; that is, all truly active covariates are retained with probability tending to one.
Thus, the proposed method is intended as a first-stage dimension reduction tool rather than a final model selection procedure.

The remainder of this paper is organized as follows. 
Section 2 reviews covariate-dependent EVI estimation and introduces the proposed screening procedure based on the kernel-based conditional Pickands estimator. 
Section 3 presents the asymptotic properties of the proposed large-bandwidth SIS method. 
Section 4 reports simulation studies, including the sensitivity of the proposed method to the tuning parameters. 
Section 5 presents a real-data application. 
Finally, Section 6 concludes the paper and discusses future research directions. 
Technical proofs are provided in the Appendix. 
The R code and data set used in this study are provided as Supplementary Information.


\section{Sure independence screening of extreme value index function}

\subsection{Conditional extreme value index estimation}

We first review the covariate-dependent extreme value index estimation.
Let us consider a pair of continuous random variables $(Y,X)\in \mathbb{R}\times {\cal X}\subset \mathbb{R}\times \mathbb{R}^p$ with $p\geq 1$, where $Y\in \mathbb{R}$ is the target variable, and $X\in {\cal X}$ is the covariate vector. 
The covariate vector $X\in{\cal X}$ is written by $X=(X^{(1)},\ldots, X^{(p)})$, where $X^{(j)}\in{\cal X}_j\subset\mathbb{R}$ for $j=1,\ldots,p$. 
In this paper, the covariates are marginally transformed to the standard uniform scale. 
Thus, we take ${\cal X}_j=[0,1]$ and ${\cal X}=[0,1]^p$; see the remark in Section 2.2.
Let $S(y\mid x)=P(Y\geq y\mid X=x)$ be the conditional survival function of $Y$ given $X=x=(x^{(1)},\ldots, x^{(p)})\in{\cal X}$. 
We are interested in the tail behavior of $S(y|x)$; hence, we introduce the fundamental theory of EVT, called the maximum domain of attraction for the class of distribution functions. 
Let $U(t|x)= \sup\{y: S(y|x)\geq 1/t\}, t>1$ be the $(1-1/t)$th conditional quantile of $Y$ given $X=x$. 
If $S(\cdot|x)$ is a continuous function, then 
\[
\frac{1}{S(U(t|x)|x)}=t,
\]
which indicates that $U(\cdot|x)$ is the inverse function of $1/S(\cdot|x)$ for any $x\in{\cal X}$. 
Then, assuming that for any $x\in{\cal X}$, there exists a function $\gamma:{\cal X}\rightarrow \mathbb{R}$ and an auxiliary function $a:\mathbb{R}_+\times{\cal X}\rightarrow\mathbb{R}_+$ such that for $\beta>0$,
\begin{align}
\left|\frac{U(\beta t|x)- U(t|x)}{a(t|x)}-\frac{\beta^{\gamma(x)}-1}{\gamma(x)}\right|\rightarrow 0,\ \ t\rightarrow \infty. \label{MDA}
\end{align}
and as $t\rightarrow\infty$, $a(\cdot|x)$ satisfies $a(\beta t|x)/a(t|x)\rightarrow \beta^{\gamma(x)}$ for all $x\in{\cal X}$.
If $\gamma(x)=0$, $(\beta^{\gamma(x)}-1)/\gamma(x)$ is replaced with $\log\beta$. 

We now introduce a typical estimator of $\gamma$. Let $\{(Y_i, X_i): i=1,\ldots, n\}, X_i=(X_i^{(1)},\ldots,X_i^{(p)})$ be an $i.i.d.$ random sample with the same distribution as $(Y,X)$. From (\ref{MDA}), for $t=n/(2k)$, we have
\[
\frac{U(n/k|x)-U(n/(2k)|x)}{U(n/(2k)|x)-U(n/(4k)|x)} \approx\frac{2^{\gamma(x)}-1}{1-2^{-\gamma(x)}}=2^{\gamma(x)}
\]
and hence, 
\[
\gamma(x)\approx \frac{1}{\log 2}\log\left(\frac{U(n/k|x)-U(n/(2k)|x)}{U(n/(2k)|x)-U(n/(4k)|x)} \right).
\]
Therefore, we can construct the estimator $\hat{\gamma}$ as
\begin{align}
\hat{\gamma}(x)= \frac{1}{\log 2}\log\left(\frac{\hat{U}(n/k|x)-\hat{U}(n/(2k)|x)}{\hat{U}(n/(2k)|x)-\hat{U}(n/(4k)|x)}\right), \label{Pickands}
\end{align}
where 
$\hat{U}(\cdot|x)$ is the estimator of $U(\cdot|x)$ and $k$ is an integer over $[1, n/4)$. 
The estimator $\hat{\gamma}(x)$ is the so-called conditional Pickands estimator; see Pickands~\cite{pickands75} and Daouia et al.~\cite{daouia11}. 
Here, $\hat{U}(\cdot|x)$ is constructed using the kernel method. 
Let $K:{\cal X}\rightarrow \mathbb{R}_+$ be any kernel function, $h>0$ be the bandwidth and $I$ be an indicator function. 
We then define 
\[
\hat{S}(y|x)=\frac{\sum_{i=1}^n K\left(\frac{X_i-x}{h}\right)I(Y_i\geq y)}{\sum_{i=1}^n K\left(\frac{X_i-x}{h}\right)}
\]
and 
\begin{align}
\hat{U}(t|x)= \sup\left\{y : \hat{S}(y|x)\geq \frac{1}{t}\right\}.\label{QR}
\end{align}
The estimator $\hat{U}(t|x)$ was also studied in Daouia et al.~\cite{daouia13}.

As the kernel function, the Gaussian and Epanechnikov kernels are commonly used in practice (e.g. Tsybakov~\cite{tsybakov09}).
To illustrate the difficulty of ordinary local estimation of $\gamma(x)$, consider the uniform kernel $K(z)=I(\|z\|<1)$, omitting the normalizing constant.
For a fixed point $x$, a small-bandwidth local estimator uses only observations satisfying $\|X_i-x\|<h$ (see Daouia et al.~\cite{daouia11}).
Furthermore, the Pickands-type estimator uses conditional quantiles at tail levels $1-1/t$ with $t\in\{n/k,n/(2k),n/(4k)\}$.
When both $h$ and $k/n$ are small, only a few observations are available simultaneously in the neighborhood of $x$ and in the relevant tail region.
Consequently, the effective sample size for estimating $\gamma(x)$ at $x$ is quite small, and the local estimator can be highly unstable.
This sparsity problem motivates dimension reduction before estimating a covariate-dependent EVI function.
In this paper, we address this issue by developing a sure independence screening method for identifying covariates that affect the EVI.
In contrast to this pointwise estimation problem, the proposed screening procedure uses a large-bandwidth regime to select truly active covariates in a stable manner.

\subsection{Screening procedure}

Sure independence screening (SIS) selects the active covariates associated with the target variable. 
In this study, we develop a SIS for covariate-dependent extreme value index estimation, defining the active covariate set of the extreme value index as 
\[
{\cal M}=\{j: \gamma(X^{(1)},\ldots,X^{(p)}) \text{ is nonconstant as a function of } X^{(j)} \}.
\]
If $j\in{\cal M}$, $\gamma(X^{(1)},\ldots,X^{(p)})$ is non-constant in $X^{(j)}$, whereas $\gamma(\cdot)$ is constant in $X^{(k)}$ for $k\in{\cal M}^c$, where ${\cal M}^c$ is the complement of set ${\cal M}$. 
The goal of this study is to estimate ${\cal M}$. 
In SIS, the estimator of ${\cal M}$ is constructed by ranking the marginal utilities from the component-wise regression of $Y$ on covariate $X^{(j)}$ for each $j\in\{1, \ldots, p\}$.

To define the marginal utilities, we use the conditional Pickands estimator, writing $\hat{\gamma}_j(x^{j})$ constructed using data $\{(Y_i, X_i^{(j)}): i=1,\ldots,n\}$ in (\ref{Pickands}) for $j=1,\ldots,p$. 
Then, the kernel is reconstructed to satisfy $K:\mathbb{R}\rightarrow\mathbb{R}$. 
We further define $\hat{\gamma}_0$ as the original Pickands estimator (Pickands 1975) of response $Y$, which corresponds to (\ref{Pickands}) with $K(z)\equiv 1$. 
Then, the marginal utility is defined by 
\begin{align}
\hat{d}_j = \frac{1}{N}\sum_{i=1}^N \{\hat{\gamma}_j(z_i)-\hat{\gamma}_0\}^2,\ \ j=1,\ldots,p \label{UtilityEq}
\end{align}
for fixed points $z_1,\ldots,z_N\in(0,1)$. 
The population version of the utility is defined in the next section. 
Intuitively, a large value of $\hat{d}_j$ indicates that $\hat{\gamma}_j(x^{(j)})$ varies with $X^{(j)}$, suggesting that the $j$th covariate is active, whereas a small value suggests that it may not be associated with the EVI.
Thus, we select covariates whose marginal utilities exceed a pre-specified threshold $\lambda>0$.
That is, we construct the estimator of ${\cal M}$ as 
\[
\widehat{{\cal M}} =\{j : h^4 \hat{d}_j >\lambda\}. 
\]
The factor of bandwidth $h^4$ is not essential for defining $\widehat{{\cal M}}$, since it can be absorbed into the threshold $\lambda$. 
However, including this factor helps describe the asymptotic behavior of $\hat{d}_j$; see Theorem 2 in the next section.

\vspace{3mm}

\noindent{\bf Remark} In this paper, we assume that each covariate is transformed to the standard uniform distribution. 
In applications, this condition is achieved by using the empirical distribution function. 
Similar arguments for using uniform-transformed predictors in regression were given by Wang and Tsai~\cite{wang09b}.
This transformation makes the marginal utilities comparable across covariates. 
It is also robust to heavy-tailed predictors and outlying covariate values while preserving the relative ordering of observations. 
Furthermore, if covariates have different marginal distributions, choosing a common bandwidth $h$ for fair comparison becomes difficult. 
We therefore use common marginal domains, a common bandwidth $h$, and common evaluation points $\{z_1,\ldots,z_N\}$ across covariates.

\subsection{Role and choice of tuning parameters}

The estimator $\hat{\gamma}_0$ depends on the intermediate sequence $k$, whereas
$\hat{\gamma}_j$ depends on both $k$ and the bandwidth $h$. We therefore write
these estimators as $\hat{\gamma}_{0,k}$ and $\hat{\gamma}_{j,k,h}(z)$,
respectively.

In ordinary nonparametric regression, a small bandwidth is used to consistently
estimate a local nonparametric function; see Gardes and Girard~\cite{gardes10a}, Daouia et al.~\cite{daouia13}, Gardes and Stupfler~\cite{gardes14}, Goegebeur et al.~\cite{goegebeur14}, and Goegebeur et al.~\cite{goegebeur15}. In contrast, the proposed screening method uses a large bandwidth.
Roughly speaking, as $h\rightarrow\infty$,
\[
\hat{S}(y\mid x) \rightarrow \frac{1}{n}\sum_{i=1}^n I(Y_i\geq y),
\qquad
\hat{U}(t\mid x)\rightarrow \hat{U}_0(t),
\]
where $\hat{U}_0(t)$ is the $(1-1/t)$th sample quantile of $Y$. 
Consequently, for $j=1,\ldots,p$,
\[
\hat{\gamma}_{j,k,h}(z) \rightarrow \hat{\gamma}_{0,k}
\quad {\rm as}\quad h\rightarrow\infty.
\]
For inactive covariates, this construction is expected to shrink the marginal
utility toward zero, whereas for active covariates the utility should remain
separated from zero even for large $h$. 
In this sense, the same intermediate sequence $k$ should be
used for $\hat{\gamma}_{0,k}$ and $\hat{\gamma}_{j,k,h}$; otherwise, the limiting
relation above would not imply a vanishing marginal utility for inactive
covariates. In practice, $k$ is chosen from a stable region of the Pickands plot
for $\hat{\gamma}_{0,k}$.

This large-bandwidth regime is different from a small-bandwidth one,
which would naturally estimate the marginal conditional EVI function
associated with each covariate. 
However, this marginal conditional EVI can be difficult to interpret. 
For example, suppose that
\[
    P(Y>y\mid X=x)=y^{-1/\gamma(x)}.
\]
Then the conditional survival function of $Y$ given $X^{(j)}=x^{(j)}$ is
\[
    P(Y>y\mid X^{(j)}=x^{(j)})
    =
    E_{-j}\!\left[
        y^{-1/\gamma(x^{(j)},X^{(-j)})}
        \mid X^{(j)}=x^{(j)}
    \right],
\]
where $X^{(-j)}=(X^{(1)},\ldots,X^{(j-1)},X^{(j+1)},\ldots,X^{(p)})$, and
$E_{-j}$ denotes expectation with respect to $X^{(-j)}$ conditional on
$X^{(j)}=x^{(j)}$. 
Even if an EVI $\gamma_j(x^{(j)})$ associated with the conditional distribution
of $Y$ given $X^{(j)}=x^{(j)}$ exists, it is induced by marginalization over
the remaining covariates. 
Therefore, it cannot, in general, be directly identified with the original function $\gamma(x)$.
Although a small-bandwidth kernel method would consistently estimate a marginal
EVI, this target is a marginalization-induced quantity, and hence its interpretation in terms of the $j$-th covariate is not clear.
We therefore do not aim to estimate this marginal EVI curve. Instead, we use a
large bandwidth and check whether the resulting local estimate is still
nonconstant. If it is, the covariate is regarded as active.

The bandwidth $h$ in the proposed screening statistic is not intended to be an
individually optimized smoothing parameter for estimating each marginal tail
regression curve. 
Instead, it serves as a common localization scale that controls the strength of smoothing uniformly over all covariates. 
This is essential for fair covariate screening: allowing covariate-specific bandwidths would change the
amount of smoothing from one covariate to another, and hence the resulting screening scores would no longer be directly comparable. 
For example, under ${\cal X}_j=[0,1]$, a bandwidth such as $h=1$ would be able to be regarded as ``large'':
for each target point $z$, $\hat{\gamma}_{j,k,h}(z)$ is constructed using the
full sample. 
Thus, it provides a simple representative choice in the large-bandwidth regime.

We do not pursue an optimal bandwidth in this paper. Large-bandwidth kernel
methods often impose a deterministic large-bandwidth regime for theoretical
analysis, while practical data-driven bandwidth selection is not the main focus;
see Eguchi and Copas~\cite{eguch02}, Eguchi et al.~\cite{eguchi03}, Penev and Naito~\cite{penev18}, and
Naito and Penev~\cite{naito21}. Our focus is instead on the screening property under a
common large bandwidth.

\subsection{Toy example}

We demonstrate the sensitivity of the proposed screening method to the bandwidth $h$.
The R code for this demonstration is provided in the Supplementary Information.
We consider simulation model in Section 4 with $n=2500$, and generate one dataset by setting \texttt{set.seed(6)} in \textsf{R}.
In this setting, $X^{(1)}$ is an active covariate, whereas $X^{(25)}$ is inactive.

The top-left and top-right panels of Figure \ref{fig1} show scatterplots of $(Y,X^{(1)})$ and $(Y,X^{(25)})$, respectively.
Here, we set $k=180$ according to Figure \ref{Sect4Fig1} in Section 4.
Since $n=2500$, we have $k/n=0.072$, and hence the highest quantile level used for estimating $\gamma_0$ is $1-k/n=0.928$.
We obtain $\hat{\gamma}_0=0.203$.

For this dataset, we calculate $\hat{\gamma}_{j}$ and $\hat d_j$ for $j\in\{1,25\}$.
We then use the Epanechnikov kernel $K(u)=(1-u^2)I(|u|\leq 1)$.
Since $X^{(j)}\in (0, 1)$, setting $h=1.0$ allows almost all observations to contribute to the estimation of $\hat{\gamma}_{j}(z)$ for $z\in(0,1)$.
Thus, $h=1.0$ can be regarded as a typical large-bandwidth case in this example.

The bottom left and right panels show $\hat{\gamma}_{j}(z)$, $j\in\{1,25\}$, for $h=0.1$ and $h=1.0$, respectively.
The $y$-axes of the two bottom panels are on different scales to aid visualization.
When the small bandwidth $h=0.1$ is used, both $\hat{\gamma}_{1}(z)$ and $\hat{\gamma}_{25}(z)$ are unstable.
In particular, $\hat{\gamma}_{25}(z)$ appears to vary with $z$, even though $X^{(25)}$ is unrelated to $Y$.
The resulting utility ratio is $\hat d_1/\hat d_{25}=0.798$, which is undesirable for screening (at least, we expect $\hat d_1/\hat d_{25}>1$).

In contrast, when $h=1.0$ is used, $\hat{\gamma}_{25}(z)$ is close to the marginal estimate $\hat{\gamma}_0$, whereas $\hat{\gamma}_{1}(z)$ still captures the nonconstant feature of the EVI.
The utility ratio becomes $\hat d_1/\hat d_{25}=16.85$, which is consistent with the active/inactive structure of the model.
This example illustrates that using a small bandwidth, as in ordinary nonparametric smoothing, can produce spurious variation in the estimated conditional EVI and may obscure active covariates.
Therefore, we expect that a large-bandwidth kernel method is more suitable for covariate screening in EVI estimation.

\begin{figure}
\centering
\includegraphics[width=\linewidth]{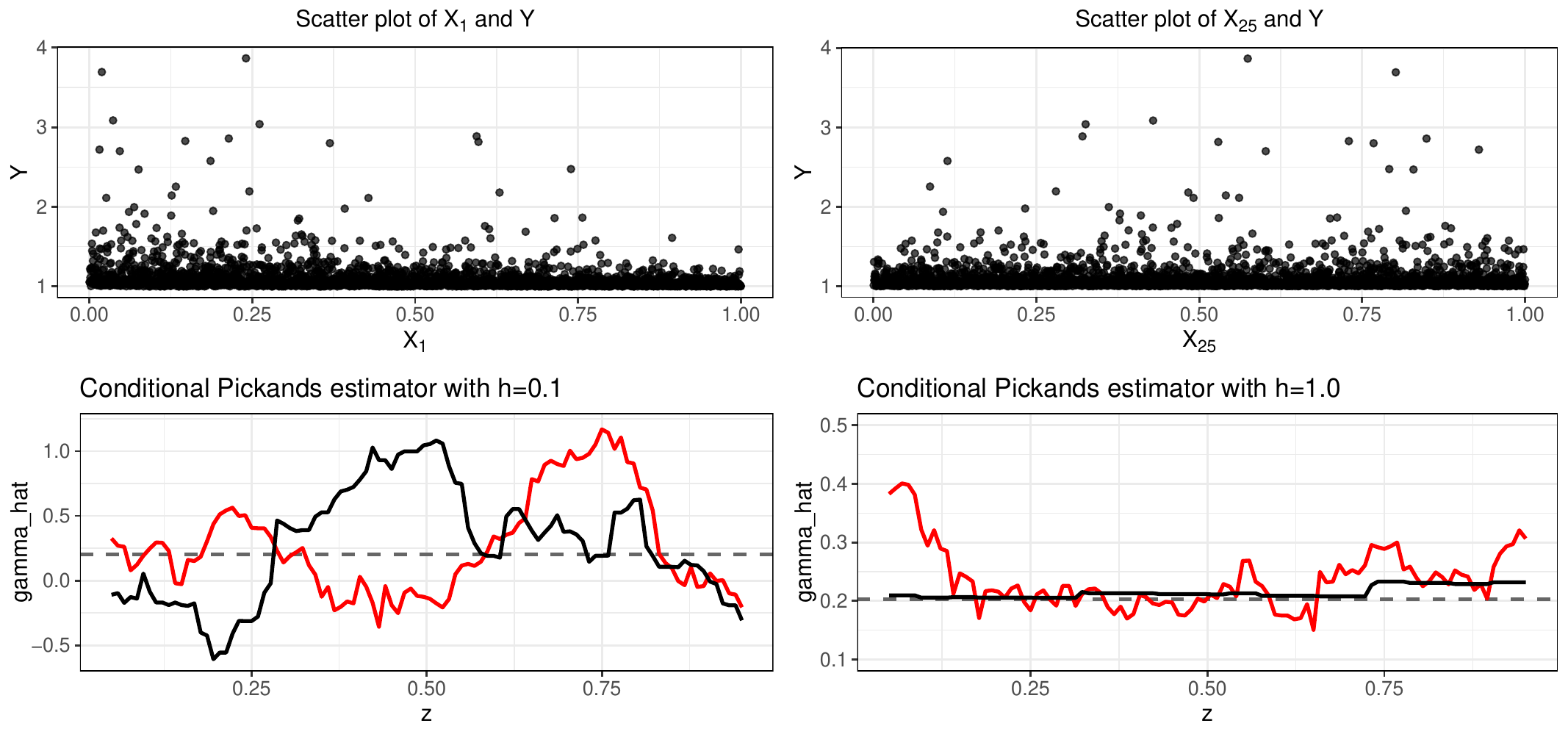}
\caption{
Toy example for bandwidth sensitivity.
Top left: Scatterplot of $\{(Y_i,X_i^{(1)}):i=1,\ldots,n\}$.
Top right: Scatterplot of $\{(Y_i,X_i^{(25)}):i=1,\ldots,n\}$.
Bottom left: $\hat{\gamma}_{1,k,h}(z)$ (red), $\hat{\gamma}_{25,k,h}(z)$ (black), and $\hat{\gamma}_0$ (dashed) for $h=0.1$.
Bottom right: $\hat{\gamma}_{1,k,h}(z)$ (red), $\hat{\gamma}_{25,k,h}(z)$ (black), and $\hat{\gamma}_0$ (dashed) for $h=1.0$.
}
\label{fig1}
\end{figure}

\section{Asymptotic properties}

Under the large-bandwidth kernel regime, the asymptotic property of the SIS is driven by analyzing the marginal tail behavior of the response $Y$. 
The conditional maximum-domain condition in (\ref{MDA}) is useful for motivating the conditional Pickands estimator under multi-dimensional covariates, but the screening theory below is formulated through the marginal tail of $Y$. 
Let $S_0(y)=P(Y\geq y)$ and $U_0(t)=\sup\{y : S_0(y)\geq 1/t\}$, $t>1$. 
We impose the following conditions. 

\begin{enumerate}
\item[(C1)] There exist $\gamma_0\in\mathbb{R}$, a function $D(y)$, a positive auxiliary function $a_0(t)$, and a function $A_0(t)$ such that, for all $y>0$,
\[
\lim_{t\to\infty}
\frac{
\{U_0(ty)-U_0(t)\}/a_0(t) - (y^{\gamma_0}-1)/\gamma_0
}{
A_0(t)
}
=
D(y),
\]
$\lim_{t\rightarrow\infty}a_0(ty)/a_0(t)=y^{\gamma_0}$ and $\lim_{t\rightarrow\infty}A_0(t)=0$.
When $\gamma_0=0$, $(y^{\gamma_0}-1)/\gamma_0$ is interpreted as $\log y$.

\item[(C2)] $F_0$ has a continuous positive density function $f_0$.

\item[(C3)] The kernel function $K$ is four times continuously differentiable in a neighborhood
of zero, satisfies $K(0)=1$, $K(z)=K(-z)$, $K^\prime(0)=0$, and $K^{\prime\prime}(0)\neq0$.
\end{enumerate}

Condition (C1) is a standard second-order condition in extreme value theory (see de Haan and Ferreira 2006), and (C2) is a regularity condition on the marginal distribution of $Y$. 
From (C1) and (C2), we have $a_0(t)\sim 1/\{t f_0(U_0(t))\}$ for sufficiently large $t$. 
Condition (C3) is a smoothness condition on the kernel function around zero. 
Because we work in a large-bandwidth regime, the normalization is taken as $K(0)=1$ rather than $\int K(u)du=1$. 
For the Gaussian and the Epanechnikov kernels, condition (C3) holds after adjusting the normalizing constant so that $K(0)=1$. 

We first define the population version of the utility. 
For $j=1,\ldots,p$ and $z\in{\cal X}_j$, we let 
\[
S_{j,h}(y\mid z) =\frac{E\left[K\left(\frac{z-X^{(j)}}{h}\right) I(Y\geq y)\right]}{E\left[K\left(\frac{z-X^{(j)}}{h}\right)\right]},
\]
$U_{j,h}(t\mid z)=\sup\{y : S_{j,h}(y\mid z)\geq 1/t\}$ and 
\[
\gamma_{j,k,h}(z)=\frac{1}{\log 2} \log\left(\frac{U_{j,h}(n/k\mid z)-U_{j,h}(n/(2k)\mid z)}{U_{j,h}(n/(2k)\mid z)-U_{j,h}(n/(4k)\mid z)}\right).
\]
We then define the population version of the utility as
\[
d_{j,k,h} = \frac{1}{N}\sum_{i=1}^N \{\gamma_{j,k,h}(z_i)-\gamma_0\}^2,\ \ j=1,\ldots,p.
\]

We now evaluate the asymptotic behavior of $d_{j,k,h}$.
Define 
\[
B_{j,n}(k\mid z)= E[(z-X^{(j)})^2\mid Y\geq U_{0}(n/k)]-E[(z-X^{(j)})^2\mid Y\geq U_{0}(n/(2k))].
\]
We then obtain the following theorem. 
\begin{theorem}\label{UtilityThm}
Suppose that $h\rightarrow \infty$, $k\rightarrow \infty$, $k/n\rightarrow 0$, $h^2A_0(n/k)\rightarrow 0$ as $n\rightarrow\infty$. 
Then, as $n\rightarrow \infty$, 
\[
 h^4 d_{j,k,h}   = \frac{\gamma_0^2\{K^{\prime\prime}(0)\}^2}{4((2^{\gamma_0}-1)\log 2)^2}\frac{1}{N}\sum_{i=1}^N \left\{2^{\gamma_0}B_{j,n}(k\mid z_i) -B_{j,n}(2k\mid z_i)\right\}^2+o(1)
\]
for $j=1,\ldots,p$.
When $\gamma_0=0$, $(2^{\gamma_0}-1)/\gamma_0$ is replaced with $\log 2$. 
If $X^{(j)}$ and $Y$ are independent, $ d_{j,k,h} \to 0$ as $n\rightarrow\infty$. 
\end{theorem}

To verify the screening property, we assume the following conditions for $d_{j,k,h}, j\in{\cal M}$. 

\vspace{3mm}

\begin{enumerate}
\item[(C4)]. There exist constants $C>0$ and $0<\eta<1/2$ such that 
\[
\underset{j\in{\cal M}}{\min}\ h^4 d_{j,k,h} > C k^{-\eta},
\]
where $k\rightarrow\infty$ and $k/n\rightarrow 0$ as $n\rightarrow \infty$. 
\item[(C5)] $\underset{j\not\in{\cal M}}{\max}\  h^4 d_{j,k,h} < c k^{-\eta}$ for some constant $0<c<C$, where $C$ and $\eta$ are given in (C4). 
\end{enumerate}

\vspace{3mm}

(C4) is a standard minimum signal condition in the theory of SIS
(e.g., Huang et al. 2013).
It means that, for active covariates, the EVI retains a detectable non-constant signal along the marginal direction of the $j$th covariate.
Under (C4), for $j\in{\cal M}$,
\[
2^{\gamma_0}B_{j,n}(k\mid z)
-
B_{j,n}(2k\mid z)
\]
would have a non-negligible magnitude along the sequence $k$. 
This indicates that the tail-conditional quantity
\[
E[(z-X^{(j)})^2\mid Y\geq U_0(t)]
\]
varies across the tail levels $t\in\{n/k,n/(2k),n/(4k)\}$.
In other words, the distribution of $X^{(j)}$ among extreme observations of $Y$ changes with the tail level for active covariates.

(C5) means that the utilities for inactive covariates are clearly smaller than those for
active covariates. This condition is stronger than what is needed for the sure screening property, but it clarifies the separation mechanism established by Theorem \ref{UtilityThm}.
Under (C5), for $j\notin{\cal M}$, the tail quantity
\[
E[(z-X^{(j)})^2\mid Y\geq U_0(t)]
\]
is asymptotically stable across the tail levels $t\in\{n/k,n/(2k),n/(4k)\}$.
Thus, the tail behavior of $Y$ does not appear to substantially affect the distributional features of $X^{(j)}$ across these tail levels.
Consequently, under (C4) and (C5), the rate of decay of 
\[
2^{\gamma_0}B_{j,n}(k\mid z) -B_{j,n}(2k\mid z)
\]
is slower for active covariates than for inactive covariates. 

To show the screening property of the proposed SIS, the following theorem is a key result.

\begin{theorem}\label{SIS}
Suppose that {\rm (C1)}--{\rm (C3)} hold. Furthermore, assume that  $h\rightarrow \infty$, $k\rightarrow \infty$, $k/n\rightarrow 0$, $h^2A_0(n/k)\rightarrow 0$ as $n\rightarrow\infty$. 
Then, as $n\rightarrow \infty$, for a constant $C>0$, there exists a constant $C^*>0$ such that
\[
P\left(h^4\underset{1\leq j\leq p}{\max} |\hat{d}_j-d_{j,k,h}|>C k^{-\eta}\right) \leq \exp\left[-C^* \{k^{1-2\eta}h^{-4} - \log(pN)\}\right].
\]
\end{theorem}

In Theorem \ref{SIS}, $p$ and $N$ are allowed to be fixed values or sequences depending on $n$. 
If $p$ and $N$ diverge as sample size increases, these sequences should satisfy $k^{1-2\eta}h^{-4}/\log(pN)\rightarrow \infty$ to ensure that
\[
P\left(h^4\max_{1\leq j\leq p} |\hat{d}_j-d_{j,k,h}|>C k^{-\eta}\right) \to 0.
\]

From Theorem \ref{SIS}, we obtain the following theorem.

\begin{theorem}\label{Screening}
Suppose that $\mathrm{(C1)}$--$\mathrm{(C4)}$ hold. Furthermore, assume that
$h\to\infty$, $k\to\infty$, $k/n\to 0$, and $h^2A_0(n/k)\to 0$ as
$n\to\infty$.
Furthermore, suppose that the threshold for screening is taken as
$\lambda_n<\delta Ck^{-\eta}$ with $0<\delta<1$. 
Then, as $n\rightarrow \infty$, there exists $C^*>0$ such that
\[
P\left({\cal M}\subseteq \widehat{{\cal M}}\right)
\geq
1-
\exp\left[
-C^*
\left\{
k^{1-2\eta}h^{-4}
-
\log\left(|{\cal M}|N\right)
\right\}
\right].
\]
If
\[
\frac{k^{1-2\eta}h^{-4}}{\log\left(|{\cal M}|N\right)}
\to \infty,
\]
then
\[
P\left({\cal M}\subseteq \widehat{{\cal M}}\right)\to 1
\quad \text{as } n\to\infty .
\]
\end{theorem}

Theorem \ref{Screening} gives the sure screening property: the estimated set $\widehat{{\cal M}}$ contains all truly active covariates with probability tending to one. 
The condition on the threshold $\lambda_n$ means that the threshold should not be too large. 
On the other hand, taking $\lambda_n$ too small may include many inactive covariates in $\widehat{{\cal M}}$. 
Thus, the choice of $\lambda_n$ is important in practice.

The following theorem gives an ideal screening consistency result. 
However, in practice, it is difficult to verify (C5) or to choose $\lambda_n$ so that the condition of Theorem \ref{Consistency} is satisfied. 

\begin{theorem}\label{Consistency}
Suppose that $\mathrm{(C1)}$--$\mathrm{(C5)}$ hold. Furthermore, assume that
$h\to\infty$, $k\to\infty$, $k/n\to 0$, and $h^2A_0(n/k)\to 0$, $k^{1-2\eta}h^{-4}/\log(pN)\rightarrow\infty$ as
$n\to\infty$.
Furthermore, suppose that the threshold for screening is taken as
$\lambda_n=c^*k^{-\eta}$ with $c<c^*<C$. 
Then, as $n\to\infty$,
\[
P\left({\cal M}= \widehat{{\cal M}}\right) \rightarrow 1.
\]
\end{theorem}

The proof of Theorem \ref{Consistency} follows directly from Theorem \ref{Screening} and the separation condition (C5). 
From Theorem \ref{Screening}, $P\left({\cal M}\subseteq \widehat{{\cal M}}\right) \rightarrow 1$. 
The separation condition (C5) and the choice of $\lambda_n$ then exclude inactive covariates with probability tending to one.
To realize the sure screening property while avoiding an excessively large selected set, the threshold should be chosen carefully. 
In mean and quantile regression, Fan and Lv~\cite{fan08}, He et al.~\cite{he13}, and related studies suggested choosing the threshold so that the estimated active set has size of order $O(n/\log n)$. 
In the present extreme value setting, a corresponding rule is to choose $\lambda_n$ so that $|\widehat{{\cal M}}|=O(k/\log k)$, where $k$ is regarded as the effective sample size in the tail region. 
However, this rule should be viewed only as a practical guideline, and a unique determination of the threshold remains an open issue.

In the present paper, we do not claim exact final model selection. 
The proposed method is designed to provide sure screening and stable ranking of covariates affecting the EVI. 
The selected set $\widehat{{\cal M}}$ should be interpreted as a candidate set for subsequent low-dimensional modeling rather than as a final model.
Because only $k$ tail observations are effectively available in extreme value analysis, a second-stage high-dimensional estimation method would be unstable without imposing additional structure such as single index model (Gardes \cite{gardes18}, Xu \cite{xu22}, Yoshida \cite{yoshida25}) and generalized additive models (Chavez-Demoulin  and Davison \cite{chavez05}, Youngman \cite{youngman19}). 
The machine learning methods (gradient boosting: Velthoen, et al. \cite{velthoen23}, random forest: Gnecco et al. \cite{gnecco24}) would also be useful.

\section{Simulation}

We conducted simulations to confirm the performance of the covariate selection for covariate-dependent extreme value index estimation. The $p$-dimensional covariate $X_i=(X_i^{(1)},\ldots,X_i^{(p)}) (i=1,\ldots,n)$ is created as follows. 
Let $\Sigma=(\rho^{|i-j|})_{ij}$ be a $p\times p$ matrix with $\rho>0$. 
First, we generate $Z_1,\ldots,Z_n \sim N_p(0,\Sigma)$, where $Z_i=(Z_i^{(1)},\ldots,Z_i^{(p)})^\top$ and $N_p$ denotes the $p$-dimensional normal distribution with mean vector 0 and covariance matrix $\Sigma$. 
Next, we calculate $X_i^{(j)}=\hat{F}_j(Z_i^{(j)})$ for $i=1,\ldots,n, j=1,\ldots,p$, where $\hat{F}_j$ is the empirical distribution based on $\{Z_1^{(j)},\ldots,Z_n^{(j)}\}$, that is, for all $j=1, \ldots,p$, $X_i^{(j)}$'s are marginally distributed as standard uniform, whereas each pair of covariates has some dependence, along with $\Sigma$. 
According to Section 4.1 of Wang and Tsai~\cite{wang09b}, the response $Y_i$ is generated from 
\[
1-F(y|x)= \frac{(1+m)y^{-1/\gamma(x)}}{1+my^{-1/\gamma(x)}},\quad y\geq 1
\]
with EVI $\gamma(x)=\gamma(x^{(1)},\ldots, x^{(p)})$ given $X=x$ and a constant $m$. 
In this paper, the true EVI function is set as 
\[
\gamma(x)=\gamma(x^{(1)}, x^{(2)}, x^{(3)}, x^{(4)}) = 0.5\exp[-x^{(1)} - x^{(2)}- x^{(3)}-x^{(4)}].
\]
From the form of $\gamma(x)$, the first four covariates are active and the remaining $p-4$ covariates are inactive. 
Note that for all $x$, we have $\gamma(x)>0$. 
Thus, in this simulation, we focus on the heavy tail case because the estimation under a heavy-tailed distribution is more difficult than under a short-tailed distribution. 
We consider $\rho=0.25$, $m=0.25$, $p=100$ and $n=2500$, and 1000 Monte Carlo replications.

Our aim is to examine whether the proposed SIS ranks the four active covariates ahead of the inactive covariates. 
We first detect the intermediate sequence $k$.
Figure \ref{Sect4Fig1} shows the Pickands plot of $\hat{\gamma}_0$. 
The estimator was stable around $k=180$, and we therefore use this value in the baseline analysis. 
Since $n=2500$, the estimators $\hat{\gamma}_j$ are constructed using the quantile levels $\{1-k/n, 1-2k/n, 1-4k/n\}=\{0.928, 0.856, 0.712\}$. 

\begin{figure}
\centering
\includegraphics[width=\linewidth]{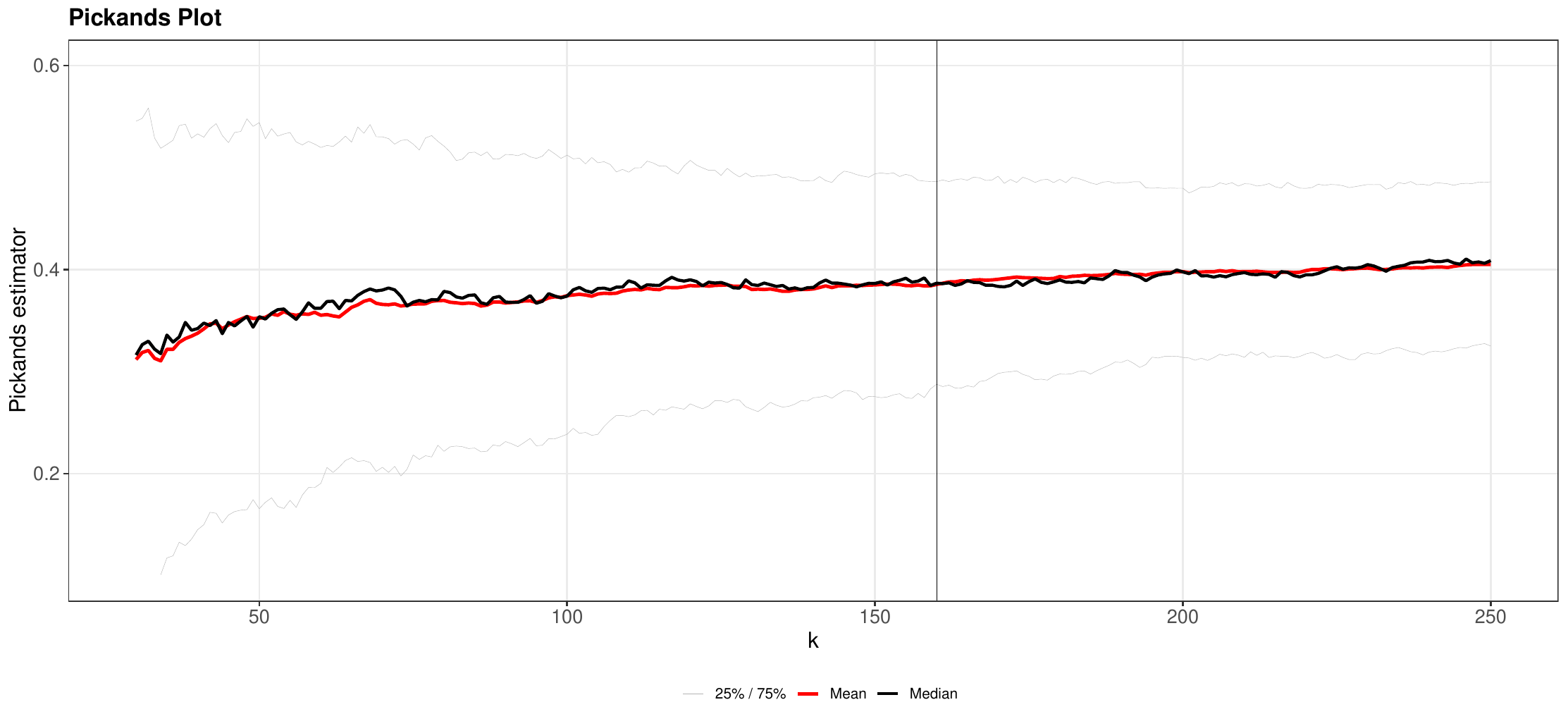}
\caption{Pickands plot for $\gamma_0$. The red, black, and grey lines correspond to the mean, median, and bounds of the interquantile range based on 1000 Monte Carlo replications. The vertical line is $k=180$. }
\label{Sect4Fig1}
\end{figure}

We now report the performance of the SIS.
Figure \ref{Sect4Fig2} shows the distribution of the marginal utilities for each covariate when $k=180$ and $h=1.0$, using the Epanechnikov kernel $K(z)=(1-z^2)I(|z|\le 1)$.
The active covariates tended to have larger mean and median utilities than the inactive covariates. 
Although the utilities exhibited non-negligible variability across Monte Carlo replications, the active covariates were clearly shifted upward relative to the inactive ones. 
This indicates that the proposed method provides an effective marginal ranking for identifying candidate covariates associated with the covariate-dependent EVI.

\begin{figure}
\centering
\includegraphics[width=\linewidth]{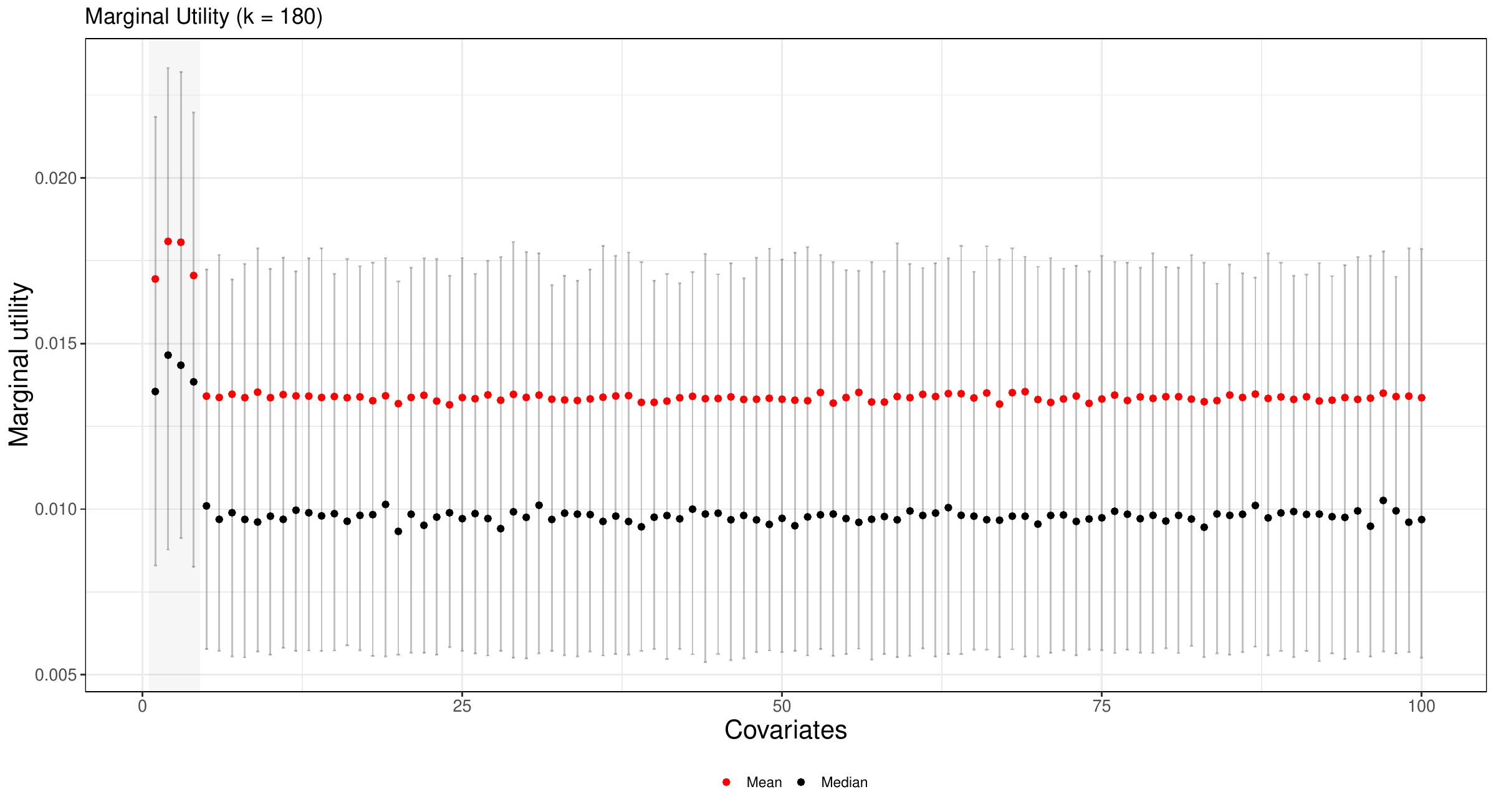}
\caption{Marginal utilities obtained by SIS for each covariate. Red and black points indicate the mean and median, and vertical bars indicate the bounds of the interquantile range based on 1000 Monte Carlo replications. The shaded region indicates the active covariates. }
\label{Sect4Fig2}
\end{figure}

We next confirm the sensitivity to the intermediate sequence $k$ and the bandwidth $h$ in the proposed SIS. 
For this, we calculate the selection rate, which is  the proportion of replications in which its marginal utility is ranked among the top 4 or top 20 for each covariate.
Since there are four active covariates, the top-4 ranking evaluates exact recovery of the active covariates, whereas the top-20 ranking evaluates the performance of the proposed method as a first-stage screening procedure.
Figure \ref{Sect4Fig3} shows the top 4 and top 20 selection rates for $120\leq k\leq 200$ with $h=1.0$ fixed.
As the result, in both panels, the active covariates, shown in red, generally have substantially larger selection rates than the inactive covariates.
However, the top-4 selection rates were not close to one. 
This is not surprising because SIS is based on marginal utilities, and inactive covariates can occasionally exhibit marginal patterns similar to those of active covariates in finite samples. 
Thus, exact recovery of the active set is a demanding criterion for a first-stage screening method.
Nevertheless, the active covariates were more frequently ranked near the top than the inactive covariates over a wide range of $k$.
The separation became clearer in the top-20 ranking, which is more relevant to the intended use of SIS as a preliminary dimension-reduction tool rather than as a final model selection method.
At the same time, the ranking is not completely insensitive to the choice of $k$.
In particular, for some values of $k$ in the range $140\le k\le 170$, the top-20 selection rates of the active covariates become close to those of several inactive or correlated inactive covariates.
This behavior is likely caused by the instability of the Pickands estimator at certain tail levels.
Therefore, in practical applications, the screening result should be checked over a range of plausible $k$ values, rather than relying on a single value of $k$.

\begin{figure}
\centering
\includegraphics[width=\linewidth]{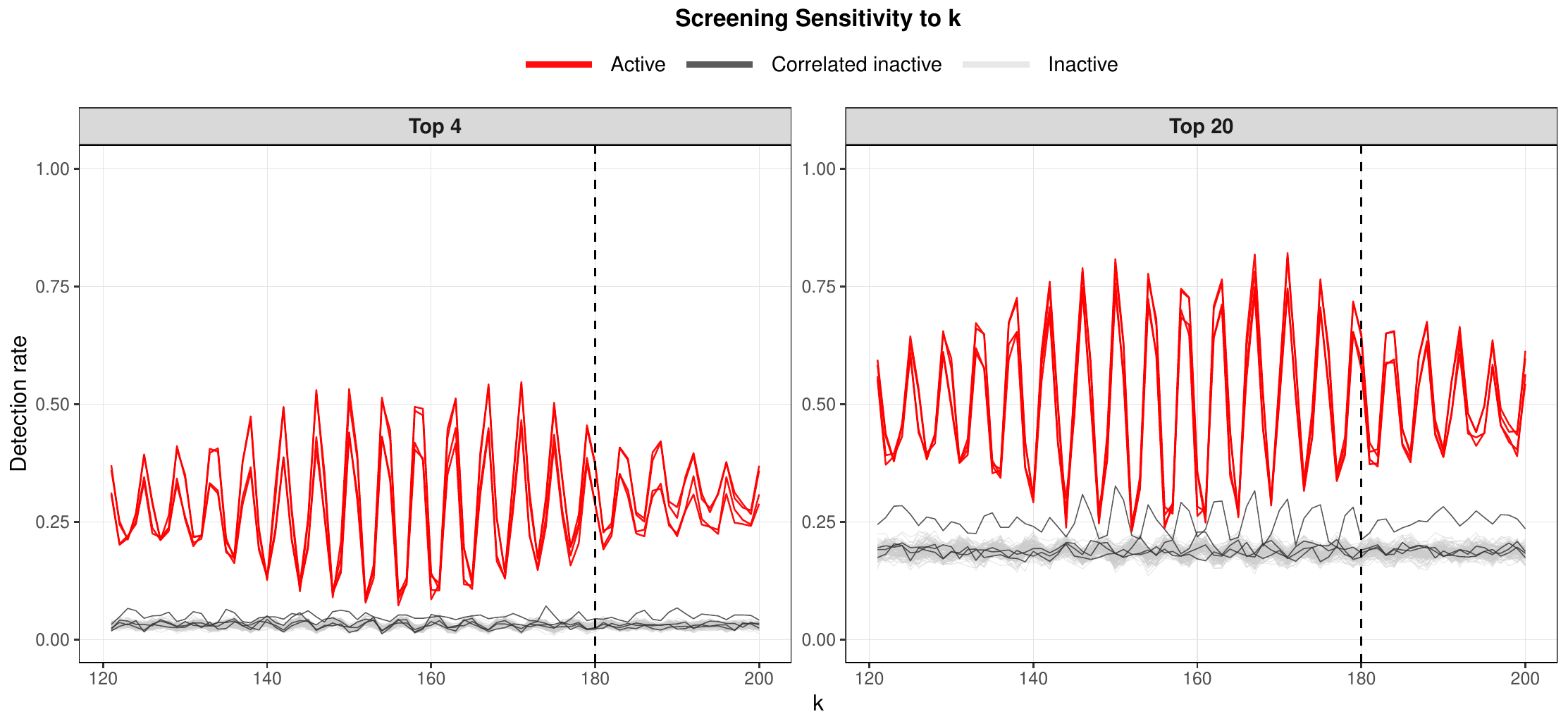}
\caption{Top-4 and top-20 selection rates of each covariate as functions of $k$, based on 1000 Monte Carlo replications. The bandwidth is fixed at $h=1.0$. The active covariates $X_1,\ldots,X_4$ are shown in red, the correlated inactive covariates $X_5,\ldots,X_8$ are shown in black, and the remaining inactive covariates $X_j (j\geq 9)$, are shown in grey.}
\label{Sect4Fig3}
\end{figure}

Figure \ref{Sect4Fig4} presents the selection rates for $h\in[0.04,10]$, with $k=180$ fixed. 
For small bandwidths, in particular $h<0.1$, the selection rates of the active covariates were close to those of the inactive covariates, suggesting that ordinary local smoothing is unstable for marginal screening in this tail-estimation problem. 
As $h$ increases, the selection rates of the active covariates improve substantially, whereas most inactive covariates remain at relatively low levels. 
In particular, for $h=1.5$ or $2.0$, the selection rates of the active covariates are sufficiently high.
The moderately high black curve in the top-20 panel corresponds to $X_5$, which is highly correlated with the active covariate $X_4$. 
This behavior is not necessarily undesirable for a screening method, because correlated inactive variables may serve as proxies for active variables in a marginal ranking.

For bandwidths larger than $h=2.0$, the active selection rates decrease, reflecting the effect of oversmoothing. 
When $h$ is too large, the kernel estimates become nearly constant, and the ranking is mainly driven by finite-sample fluctuations. 
Since the true EVI in this simulation is monotone in the active covariates, the active utilities appear to shrink toward zero in a more structured way, whereas inactive utilities fluctuate around zero. 
These noise-like fluctuations can slightly dominate the oversmoothed active utilities when $h$ is very large. 
This naturally raises the question of how to choose $h$ in practice. 
Although we do not attempt to optimize $h$ in this paper, a simple rule of thumb is to take the smallest bandwidth for which each kernel fit uses essentially all observations.
For example, when each covariate is scaled to $[0,1]$, $h=1$ provides a natural value for this rule of thumb. 
This choice keeps the estimator in the large-bandwidth regime while avoiding unnecessarily large bandwidths that make the local estimates nearly constant.

Overall, these results support the large-bandwidth motivation of the proposed screening method.

\begin{figure}
\centering
\includegraphics[width=\linewidth]{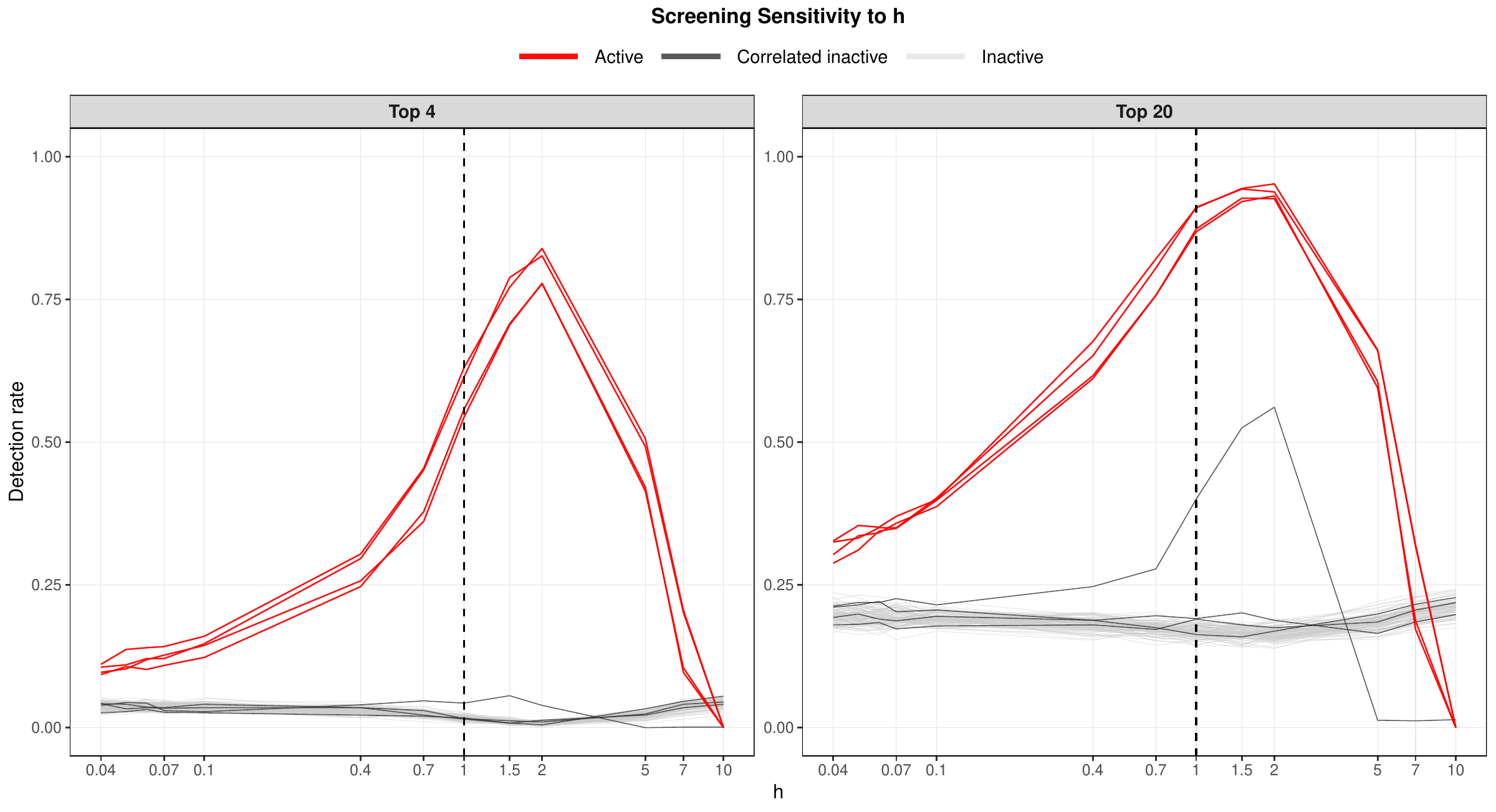}
\caption{Top-4 and top-20 selection rates of each covariate as functions of $h$, based on 1000 Monte Carlo replications. The intermediate sequence is fixed at $k=180$. The active covariates $X_1,\ldots,X_4$ are shown in red, the correlated inactive covariates $X_5,\ldots,X_8$ are shown in black, and the remaining inactive covariates $X_j (j\geq 9)$, are shown in grey. }
\label{Sect4Fig4}
\end{figure}

\section{Data application}

We applied the proposed method to the Communities and Crime dataset, which is available from the UCI Machine Learning Repository \footnote{https://archive.ics.uci.edu/ml/datasets/Communities+and+Crime+Unnormalized}. 
This dataset combines socioeconomic data from the 1990 US Census, law enforcement management and administrative statistics (LEMAS) survey, and crime data from the 1995 FBI uniform crime reporting (UCR). 
The data consist of several crime and social variables from 2215 communities and cities (e.g., New York City and San Angelo City). 
Regarding the crime variable, the data records comprise the number of murders, robberies, and six other crimes in each community and city. 
Examples of the 125 social variables include median family income, number of homeless people counted in the street, and number of police officers. 
Details are provided on the homepage of the dataset. 
We aim to identify social variables that are associated with the upper tail behavior of crime rates in US communities and cities. 
In this study, we focus on robbery as a serious crime. 
Because the number of crimes is strongly related to the population of a community, we use the number of robberies per population as the response $Y$, which is labeled as \texttt{robbbPerPop} in the original dataset.
As covariates, we consider all 125 social variables. 
However, some variables are categorical or had many missing values. Therefore, we removed variables with at most 100 distinct values or more than 200 missing values from the covariate set.
Finally, we considered $p=98$ social variables as covariates, denoted by $X=(X^{(1)},\ldots, X^{(p)})$. 
We further removed eight missing samples (communities and cities) in advance to generate a dataset $\{(Y_i,X_i): i=1,\ldots, n\}$ with $n=2213$ communities and cities. 
Since some covariates are heavy-tailed, each observation of the $j$th covariate $X_i^{(j)}$ is transformed by $\hat{F}_j(X_i^{(j)})$, where $\hat{F}_j$ is the empirical distribution based on the observations $\{X^{(j)}_{1},\ldots,X^{(j)}_{n}\}$. 
Subsequently, $X^{(j)}$ is uniformly distributed in the interval $[0,1]$. 

The response is derived from count data. 
It is well known in the EVT literature that domain-of-attraction properties can fail for discrete distributions, as demonstrated by Anderson~\cite{anderson75} and Shimura~\cite{shimura12}. 
To mitigate this discreteness, we apply jittering to the response. 
Let $(Y^*,P)$ denote the number of robberies and the population in each city. 
We then create jittered data $Y^*_{{\rm Jitter}} = Y^*+U$, where $U\sim U(-0.5,0.5)$. 
Then, the response $Y$ is redefined by $Y=Y^*_{{\rm Jitter}}/P$. 
This transformation allows $Y$ to be treated as a continuous response. 
We now apply the proposed SIS to this dataset. 

We first choose the intermediate sequence $k$. 
Figure \ref{Sect5Fig1} shows the Pickands plot of $\gamma_0$ with 100 replications of jittering. 
From this result, the Pickands plot is stable with respect to jittering. 
Because the estimator $\hat{\gamma}_0$ is stable around $k=160$, we set $k=160$ for screening. With $n=2213$, the estimators $\hat{\gamma}_j$ are constructed using the sample with the quantile levels $\{1-k/n, 1-2k/n, 1-4k/n\}=\{0.93, 0.86, 0.71\}$ of the response. 

\begin{figure}
\centering
\includegraphics[width=\linewidth]{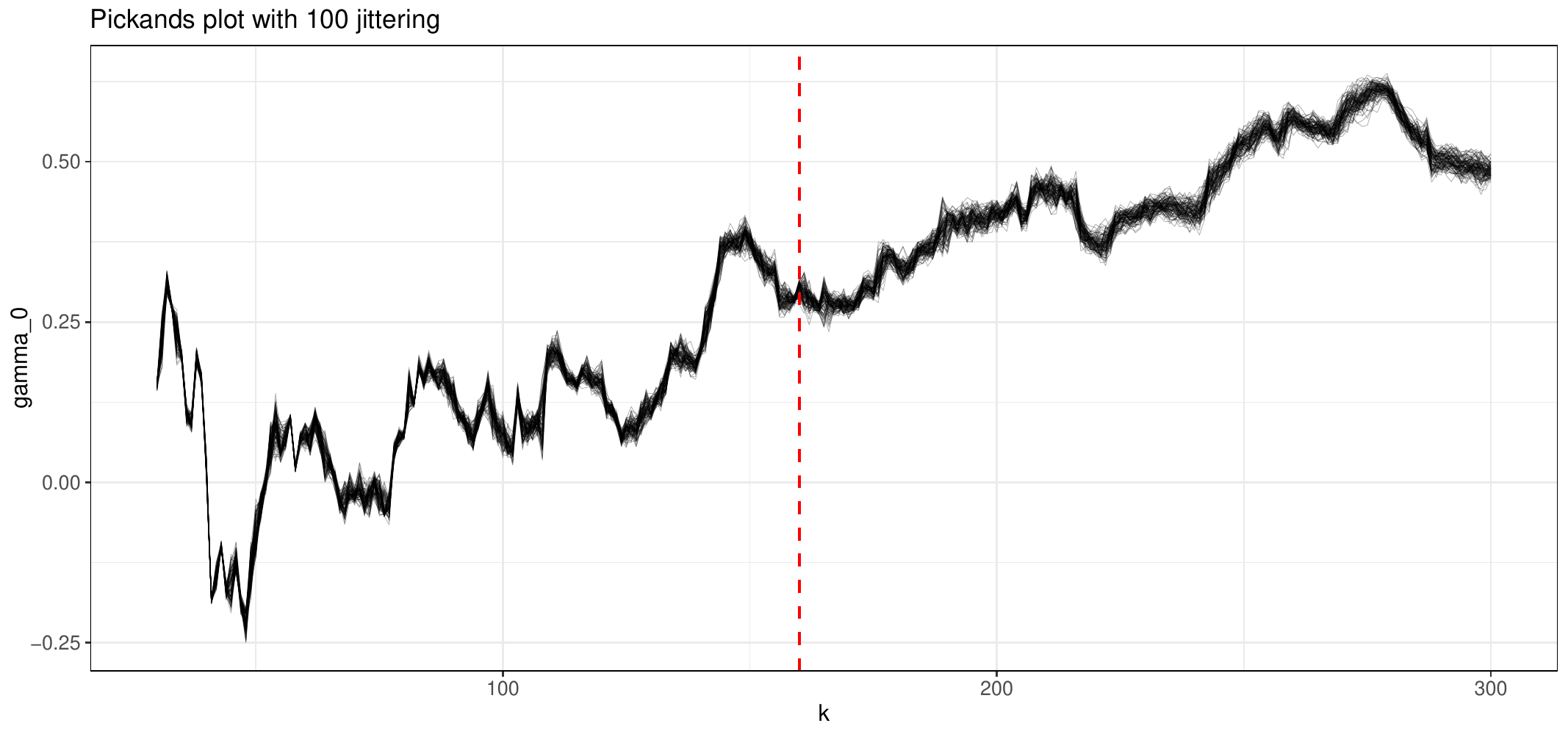}
\caption{Pickands plot for $\gamma_0$ with 100 replications of jittering. The dashed line marks $k=160$. }
\label{Sect5Fig1}
\end{figure}

Under $k=160$, $h=1.0$ and $K(z)=(1-z^2)I(|z|\leq 1)$, we calculate the marginal utilities for all covariates.
Figure \ref{Sect5Fig2} illustrates the marginal utilities for each variable. 
The covariates are ordered by the mean marginal utility over the 100 jittering replications. 
We focus on the top 13 variables, which are marked by the shaded region in the figure. 
These 13 covariates are listed in Table \ref{Table1}. 
The selected covariates mainly represent family structure, racial composition, poverty, education, housing conditions, and income-related characteristics. 
Several selected covariates are conceptually related, such as NumKidsBornNeverMar and PctKidsBornNeverMar, and PctKids2Par, PctFam2Par, and PctYoungKids2Par. 
Therefore, the SIS results should be interpreted as identifying candidate covariates or groups of related covariates associated with the tail behavior of robbery rates, rather than as a unique final model. 
Note that these results are exploratory associations and should not be interpreted as causal effects.

\begin{figure}
\centering
\includegraphics[width=\linewidth]{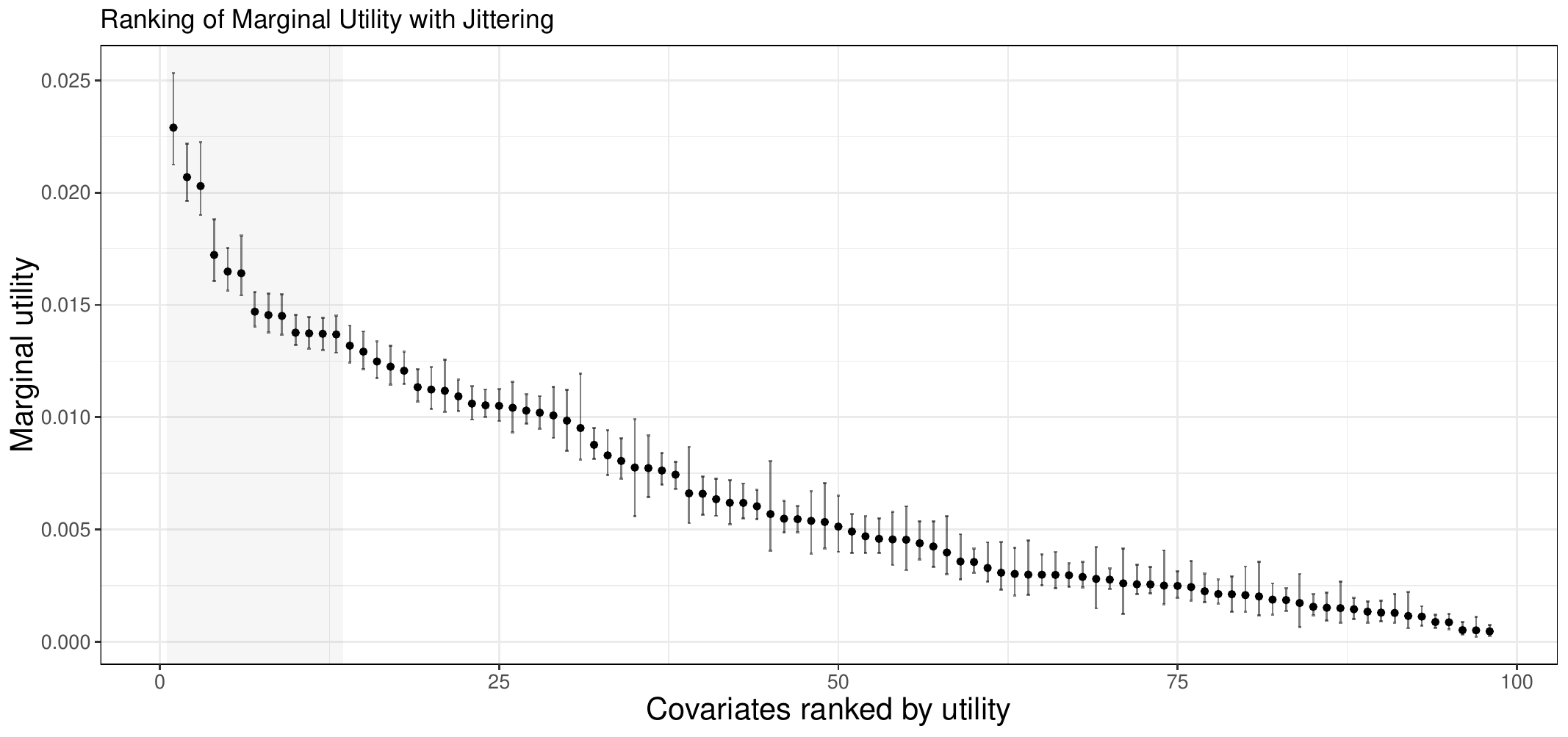}
\caption{Marginal utilities obtained by SIS for jittered robbery rates. Points indicate the mean over jitter replications, and vertical bars indicate the 2.5\% and 97.5\% quantiles. Covariates are ordered by the mean marginal utility. The shaded region indicates the top 13 variables inspected as SIS candidates. }
\label{Sect5Fig2}
\end{figure}

Figure \ref{Sect5Fig3} shows the 0.71, 0.86, and 0.93 conditional quantiles of robbery rates given the top four covariates selected by SIS. 
The conditional quantiles vary with each selected covariate, suggesting that these variables are associated with the upper tail of the robbery rate distribution. 

\begin{figure}
\centering
\includegraphics[width=\linewidth]{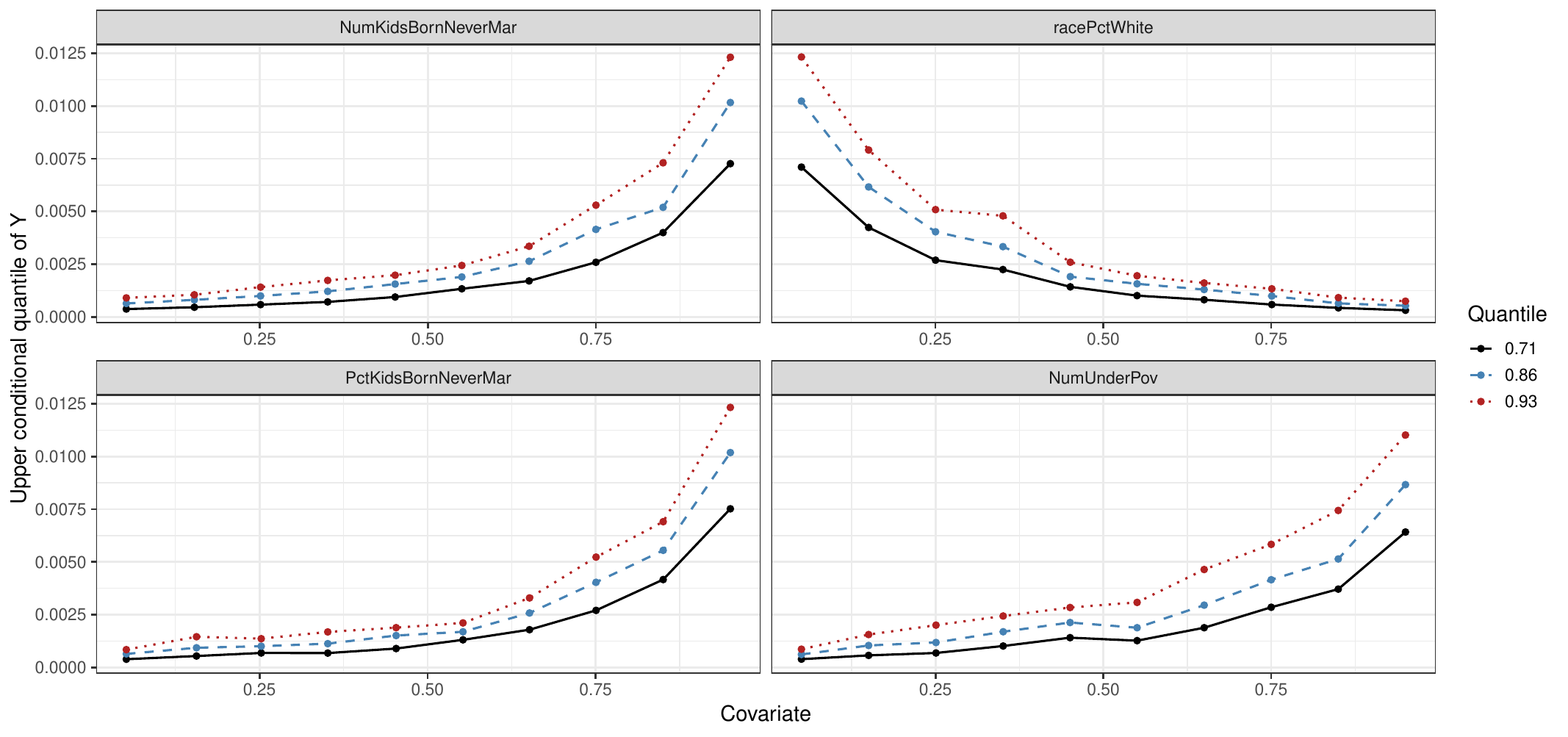}
\caption{Conditional quantiles of robbery rates at levels 0.71, 0.86, and 0.93 given each of the top four covariates ranked by marginal utility.}
\label{Sect5Fig3}
\end{figure}

Lastly, we examine the sensitivity of the marginal utilities to the choice of $k$. 
For this analysis, we fix the bandwidth at $h=1.0$ and use a fixed jitter pattern. 
Figure \ref{Sect5Fig4} shows the marginal utilities for $121\leq k\leq 200$.
The red curves correspond to the top 13 covariates selected in Figure \ref{Sect5Fig2}, the black curves correspond to the covariates ranked from 14th to 38th by the SIS, and the gray curves represent the remaining covariates.
For $k\geq 160$, the top-ranked covariates tend to remain in the upper layer of the marginal utility paths.
However, around the boundary of the top 13, the ordering of covariates is not completely stable.
This reflects the result with Figure \ref{Sect5Fig2}, where the marginal utilities of the covariates around the 13th rank are relatively close to each other.
Meanwhile, around $140<k<150$, some non-selected covariates temporarily attain relatively large utilities.
A similar upward fluctuation is also observed in the Pickands plot (see Figure \ref{Sect5Fig1}) over the same range of $k$, suggesting local instability of the tail index estimates.
Thus, the rankings in this narrow range may be less stable.

\begin{landscape}
\begin{table}[t]
\centering
\caption{Top 13 covariates selected by the proposed SIS procedure for the robbery data. Mean utility and SD describe the average and standard deviation of utility over 100 jittering replications.}
\label{Table1}
\begin{tabular}{cllcc}
\toprule
Rank & Covariate & Description & Mean utility & SD \\
\midrule
1  & \texttt{NumKidsBornNeverMar} & Number of kids born to never-married parents & 0.0229 & 0.0010 \\
2  & \texttt{racePctWhite}        & Percentage of population that is White & 0.0207 & 0.0007 \\
3  & \texttt{PctKidsBornNeverMar} & Percentage of kids born to never-married parents & 0.0203 & 0.0008 \\
4  & \texttt{NumUnderPov}         & Number of persons under the poverty level & 0.0172 & 0.0007 \\
5  & \texttt{PctKids2Par}         & Percentage of kids in two-parent households & 0.0165 & 0.0005 \\
6  & \texttt{racepctblack}        & Percentage of population that is Black & 0.0164 & 0.0007 \\
7  & \texttt{pctWPubAsst}         & Percentage of households with public assistance income & 0.0147 & 0.0004 \\
8  & \texttt{PctPersDenseHous}    & Percentage of persons in dense housing & 0.0145 & 0.0004 \\
9  & \texttt{PctFam2Par}          & Percentage of families headed by two parents & 0.0145 & 0.0005 \\
10 & \texttt{pctWInvInc}          & Percentage of households with investment or rent income & 0.0138 & 0.0004 \\
11 & \texttt{PctNotHSGrad}        & Percentage of population without a high-school diploma & 0.0137 & 0.0004 \\
12 & \texttt{PctYoungKids2Par}    & Percentage of young kids in two-parent households & 0.0137 & 0.0004 \\
13 & \texttt{PctVacantBoarded}    & Percentage of vacant housing units that are boarded up & 0.0137 & 0.0004 \\
\bottomrule
\end{tabular}
\end{table}
\end{landscape}

\begin{figure}
\centering
\includegraphics[width=\linewidth]{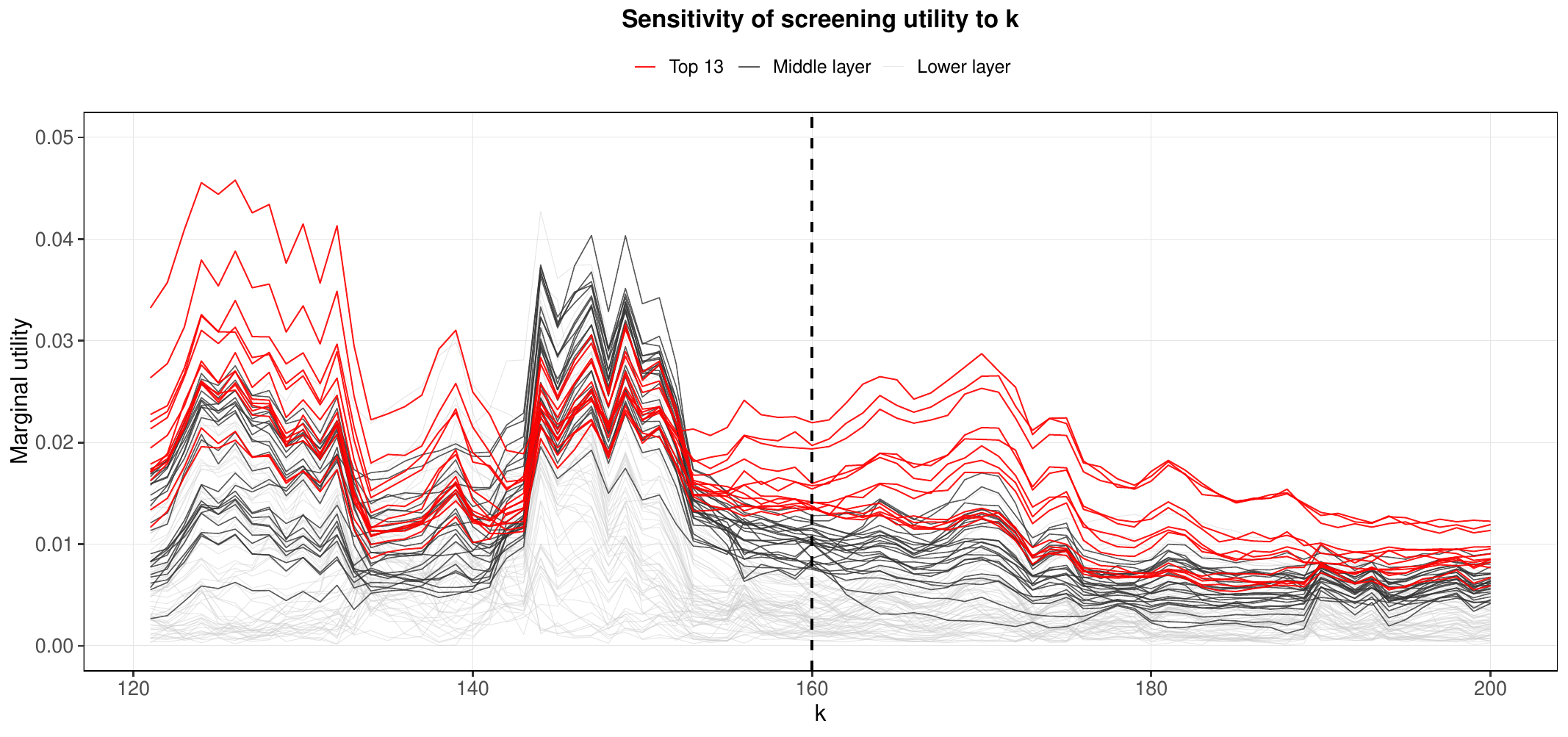}
\caption{Marginal utilities over $k\in[121,200]$.
The red curves represent the top 13 covariates selected in Figure \ref{Sect5Fig2}, 
the black curves represent the covariates ranked from 14th to 38th, 
and the gray curves represent the remaining covariates. }
\label{Sect5Fig4}
\end{figure}

\section{Discussion}

We studied the sure independence screening (SIS) for covariate-dependent extreme value index estimation. 
The proposed SIS procedure ranks covariates by marginal utilities constructed from a kernel-based conditional Pickands estimator. 
We established the sure screening property, which guarantees that the probability of missing truly active covariates tends to zero under suitable regularity conditions. 
The purpose of the proposed method is not exact final model selection, but first-stage dimension reduction with a low false-negative rate. 
Therefore, the selected set after SIS should be interpreted as a candidate set for subsequent low-dimensional modeling.

A key feature of the present study is the use of a large-bandwidth kernel estimator. 
In ordinary nonparametric estimation, the bandwidth is usually assumed to satisfy $h\to0$. 
However, this local-smoothing regime is not suitable for screening in extreme value analysis when the effective tail sample size is small. 
Indeed, the toy example in Section 2.4 and the simulation study in Section 4 show that screening performance is reduced when $h$ is too small. 
In contrast, our theory is developed under the large-bandwidth regime $h\to\infty$. 
In this regime, the estimator provides a stable marginal contrast between the tail behavior of the response and each covariate, rather than a local nonparametric estimate of the conditional EVI function. 
This viewpoint is in line with large-bandwidth kernel smoothing arguments in Eguchi and Copas~\cite{eguch02}, Penev and Naito~\cite{penev18}, and Naito and Penev~\cite{naito21}, while our contribution is to use this regime for marginal screening in covariate-dependent EVI estimation.

Although we used the conditional Pickands estimator to construct the marginal utility, the main idea is not restricted to this particular estimator. 
The same large-bandwidth screening principle could be combined with other EVI estimators, such as the Hill estimator~\cite{hill75} for heavy-tailed distributions or the moment estimator of Dekkers et al.~\cite{dekkers89}. 
In such extensions, the marginal utility would be defined by replacing the conditional Pickands estimator with another conditional tail-index estimator. 
The essential point is not the specific choice of the tail-index estimator, but the use of a stable large-bandwidth kernel construction for marginal screening.

This study can be extended in several directions. 
First, the proposed method is based on marginal screening. 
Therefore, it can select covariates that directly affect the extreme value index, but it may also select marginally important covariates that are highly correlated with the truly active covariates. 
This problem is common in high-dimensional regression. 
B\"uhlmann et al.~\cite{buhlmann09}, Cho and Fryzlewicz~\cite{cho12}, and Ma et al.~\cite{ma17} attempted to overcome this issue in mean and quantile regression. 
Although such an extension is challenging in the fully nonparametric extreme-value setting, it is important to develop screening procedures that can handle highly correlated covariates in EVI estimation.

Second, our approach assumes that the extreme value index varies with the covariates. 
Consequently, the method is inclined to select at least one covariate as important. 
However, from the proposed screening result alone, it may be difficult to determine whether the covariate with the largest marginal utility is truly active or whether the apparent signal is driven by covariate dependence in other tail-related parameters. 
In some real data analyses, for example, generalized extreme value models or generalized Pareto models with a constant extreme value index, or shape parameter, and covariate-dependent location and scale parameters are used (Davison and Smith \cite{davison90}, Yee and Stephenson \cite{yee07}, Neville et al. \cite{neville11}). 
Thus, in extreme value modeling, it is also important to develop covariate screening methods for models that allow either a covariate-dependent or constant extreme value index, together with other covariate-dependent parameters. 
This is a challenging problem and remains an important topic for future research.

\subsection*{Appendix: Proofs of Theorems}

In the Appendix, we describe the proofs of Theorems in Section 3. 

Throughout the proofs of the lemmas and theorems, the constants $C$ and $C^*$ will denote a generic positive constant that may be different in different places for simplicity. 

\begin{lemma}\label{Bernstein}
Let $Z_1,\ldots, Z_n$ be independent random variables with $E[Z_i]=0$ and $|Z_i|\leq M$ almost surely, where $M$ is a positive constant. Then, for all positive $\varepsilon>0$, 
$$
P\left(\left|\sum_{i=1}^n Z_i\right|>\varepsilon\right)\leq\exp\left[-\frac{1}{2}\frac{\varepsilon^2}{\sum_{i=1}^n E[Z_i^2]+3^{-1}M\varepsilon}\right].
$$  
\end{lemma}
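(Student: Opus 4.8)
The plan is to establish this classical Bernstein inequality by the exponential Chernoff method. First I would fix $t>0$ and apply Markov's inequality to the exponentiated one-sided sum, writing $P(\sum_{i=1}^n Z_i > \varepsilon)\le e^{-t\varepsilon}\,E[e^{t\sum_{i=1}^n Z_i}] = e^{-t\varepsilon}\prod_{i=1}^n E[e^{tZ_i}]$, where the factorization of the moment generating function uses independence. The absolute-value tail then follows by running the same argument on $-Z_1,\ldots,-Z_n$ and combining the two one-sided estimates.

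The heart of the argument is a uniform bound on each factor $E[e^{tZ_i}]$. I would expand the exponential as a power series, invoke $E[Z_i]=0$ to annihilate the linear term, and control the higher moments through the almost-sure bound $|E[Z_i^k]|\le M^{k-2}E[Z_i^2]$ for $k\ge 2$. Summing the resulting series and using the elementary estimate $k!\ge 2\cdot 3^{k-2}$ collapses it to a geometric series convergent for $tM<3$, which after applying $1+u\le e^u$ yields
$$
E[e^{tZ_i}]\le \exp\left(\frac{t^2\,E[Z_i^2]}{2(1-tM/3)}\right).
$$
Writing $\sigma^2=\sum_{i=1}^n E[Z_i^2]$ and multiplying the factors gives $P(\sum_{i=1}^n Z_i>\varepsilon)\le \exp\bigl(-t\varepsilon + t^2\sigma^2/(2(1-tM/3))\bigr)$ for every $t\in(0,3/M)$.

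Finally I would select $t=\varepsilon/(\sigma^2+M\varepsilon/3)$, which lies in the admissible range $(0,3/M)$ since $tM/3=M\varepsilon/(3\sigma^2+M\varepsilon)<1$. Substituting this value simplifies $1-tM/3=\sigma^2/(\sigma^2+M\varepsilon/3)$, so the quadratic term reduces to $\varepsilon^2/\{2(\sigma^2+M\varepsilon/3)\}$ while the linear term contributes $-\varepsilon^2/(\sigma^2+M\varepsilon/3)$; adding them collapses the exponent exactly to $-\tfrac12\varepsilon^2/(\sigma^2+M\varepsilon/3)$, which is the claimed bound. The one delicate step is the moment generating function estimate, specifically the geometric-series bookkeeping producing the $1-tM/3$ denominator; once that is in hand, the optimization over $t$ is purely mechanical and, by the explicit choice above, even exact.
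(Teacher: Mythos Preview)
Your argument is the standard Chernoff-method proof of Bernstein's inequality and is correct in all essentials. The paper itself does not prove this lemma: it simply states that it ``is the Bernstein's inequality, which is detailed in Vershynin (2018),'' so there is no proof in the paper to compare against. Your derivation is essentially the one that appears in that reference.

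One small remark: combining the two one-sided bounds via a union bound produces a leading factor of $2$ on the right-hand side, i.e.\ $2\exp[-\tfrac12\varepsilon^2/(\sigma^2+M\varepsilon/3)]$, whereas the lemma as stated omits this constant. This is a harmless discrepancy in the lemma's statement rather than a flaw in your reasoning, and it has no effect on any subsequent use in the paper, where such constants are absorbed into $C^*$.
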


Lemma \ref{Bernstein} is Bernstein's inequality, which is detailed in Vershynin~\cite{vershunin}.

\begin{lemma}\label{univariateEVI}
For any $\varepsilon>0$, there exists a constant $C^*>0$ such that
\[
P(|\hat{\gamma}_0-\gamma_0|>\varepsilon)\leq 2\exp\left[-C^* \frac{\varepsilon^2 k}{1+\varepsilon} \right].
\]
\end{lemma}

The proof of Lemma \ref{univariateEVI} is omitted since it is obtained by Lemmas \ref{condDist}, \ref{condQuantile} and \ref{condEVI} with $K(z)\equiv 1$ and $h=1$.

Let $t_n$ be the sequence satisfying $t_n\rightarrow\infty$ and $t_n/n\rightarrow 0$ as $n\rightarrow\infty$. 
We define
\[
S_{j,h}(y\mid z) = \frac{E\left[K\left(\frac{z-X_i^{(j)}}{h}\right)I(Y\geq y)\right]}{E\left[K\left(\frac{z-X_i^{(j)}}{h}\right)\right]}
\]

We first state the asymptotic inference of population version of conditional distribution function with kernel as $h\rightarrow\infty$. 

\begin{lemma}\label{condDistPopulation}
For $j=1,\ldots, p$, 
\[
h^2t\{S_{j,h}(U_0(t)\mid z)  - S_0(U_0(t))\}  = \frac{K^{\prime\prime}(0)}{2}\left[E[(z-X_i^{(j)})^2]-E[(z-X_i^{(j)})^2\mid Y_i\geq U_0(t)]\right](1+o(1))
\]
as $h\rightarrow \infty$ and $t\rightarrow\infty$.
\end{lemma}

Note that $S_0(U_0(t))=1/t$ in Lemma \ref{condDistPopulation}. 
Thus, we have $S_{j,h}(U_0(t)\mid z) = 1/t+ O(1/(h^2t))$. 
Furthermore, if $X^{(j)}$ and $Y$ are independent, we obtain $S_{j,h}(y\mid z)  - S_0(y) =0$.

\begin{proof}[Proof of Lemma \ref{condDistPopulation}]
Denote $a^{(t)}(x) = d^t/dx^t a(x)$ for differentiable function $a$. 
Note that for the kernel with (C3), we have $K^{(t)}(0)=0$ for odd $t$, 
and hence, 
\begin{align}
K\left(\frac{z-X_i^{(j)}}{h}\right) = 1 + \frac{K^{\prime\prime}(0)}{2}\left(\frac{z-X_i^{(j)}}{h}\right)^2 + O\left(\frac{1}{h^4}\right).  \label{kernelLargeh}
\end{align}
We first consider the fixed $y$. 
Define ${\rm Cov}(A,B)=E[AB]-E[A]E[B]$ for random pair $(A,B)$. 
By the definition of $S_{j,h}(y\mid z)$ and $S(y)=P(Y\geq y)=E[I(Y\geq y)]$, we have 
\begin{align*}
S_{j,h}(y\mid z)-S(y) 
&= 
\frac{ {\rm Cov}\left(K\left(\frac{z-X_i^{(j)}}{h}\right), I(Y\geq y)\right)}
{E\left[K\left(\frac{z-X_i^{(j)}}{h}\right)\right]}
\end{align*}
Then, we obtain from (\ref{kernelLargeh}) that
\begin{align*}
E\left[K\left(\frac{z-X_i^{(j)}}{h}\right)\right]
= 1+  \frac{K^{\prime\prime}(0)}{2h^2}E[(z-X_i^{(j)})^2]+O(h^{-4})=1+O(h^{-2})
\end{align*}
and 
\begin{align*}
&{\rm Cov}\left(K\left(\frac{z-X_i^{(j)}}{h}\right), I(Y\geq y)\right)\\
&= \frac{K^{\prime\prime}(0)}{2h^2}\{1-F_0(y)\}\{E[(z-X_i^{(j)})^2\mid Y\geq y]-E[(z-X_i^{(j)})^2]\}(1+o(1)).
\end{align*}
Putting $y=U_0(t)$, we obtain 
\begin{align*}
&{\rm Cov}\left(K\left(\frac{z-X_i^{(j)}}{h}\right), I(Y\geq U_0(t))\right)\\
&= \frac{K^{\prime\prime}(0)}{2h^2t}\{E[(z-X_i^{(j)})^2\mid Y\geq U_0(t)]-E[(z-X_i^{(j)})^2]\}(1+o(1)),
\end{align*}
which completes the proof.
\end{proof}

\begin{lemma}\label{condQuantPopulation}
For $j=1,\ldots,p$, 
\[
h^2 \frac{U_{j,h}(t_n\mid z) - U_0(t_n)}{a_0(t_n)} = 
\frac{K^{\prime\prime}(0)}{2}\{E[(z-X_i^{(j)})^2\mid Y\geq U_{0}(t_n)]-E[(z-X_i^{(j)})^2]\}(1+o(1))
\]
\end{lemma}

\begin{proof}[Proof of Lemma \ref{condQuantPopulation}]

We first show 
\[
S_0(U_{j,h}(t_n\mid z)) \approx S_{j,h}(U_{j,h}(t_n\mid z))= \frac{1}{t_n}.
\]
From Lemma \ref{condDistPopulation}, we obtain 
\begin{align*}
S_0(U_{j,h}(t_n\mid z))
&=
S_{j,h}(U_{j,h}(t_n\mid z)) + S_0(U_{j,h}(t_n\mid z))-S_{j,h}(U_{j,h}(t_n\mid z))\\
&=
  \frac{1}{t_n} + O\left(\frac{1}{t_nh^2}\right).
\end{align*}
Thus, 
\[
1-F_0(U_{j,h}(t_n\mid z)) =  \frac{1}{t_n}(1+O(h^{-2})).
\]

Let $f_0(y)=- d S_0(y)/dy$. 
Since $U_{j,h}(t_n\mid z), U_0(t_n)\rightarrow\infty$, we have 
\[
S_{0}(U_{j,h}(t_n\mid z))
= S_{0}(U_0(t_n)) -f_0(U_0(t_n))(U_{j,h}(t_n\mid z)-U_0(t_n))(1+o(1)). 
\]
Meanwhile, by the definition of $U_{j,h}$ and $U_0$, we have 
\[
S_{0}(U_0(t_n))=\frac{1}{t_n} = S_{j,h}(U_{j,h}(t_n\mid z)\mid z).
\]
These imply that 
\[
U_{j,h}(t_n\mid z)-U_0(t_n)
=\frac{S_{j,h}(U_{j,h}(t_n\mid z)\mid z)-S_{0}(U_{j,h}(t_n\mid z))}{f(U_0(t_n))}(1+o(1)).
\]
From Lemma \ref{condDistPopulation}, we obtain 
\begin{align*}
&S_{j,h}(U_{j,h}(t_n\mid z)\mid z)-S_{0}(U_{j,h}(t_n\mid z))\\
&=\frac{K^{\prime\prime}(0)}{2h^2}\{1-F_0(U_{j,h}(t_n\mid z))\}\{E[(z-X_i^{(j)})^2\mid Y\geq U_{j,h}(t_n\mid z)]-E[(z-X_i^{(j)})^2]\}(1+o(1)).
\end{align*}
Thus, 
\begin{align*}
&h^2t_n f(U_0(t_n))(U_{j,h}(t_n\mid z)-U_0(t_n))\\
&=\frac{K^{\prime\prime}(0)}{2}\{E[(z-X_i^{(j)})^2\mid Y\geq U_{j,h}(t_n\mid z)]-E[(z-X_i^{(j)})^2]\}(1+o(1)).
\end{align*}
From (C1), we have $a_0(t_n)\sim 1/\{t_n f_0(U_0(t_n))\}$. 
Since $U_{j,h}(t_n\mid z)\sim U_0(t_n)$, we can replace $E[(z-X_i^{(j)})^2\mid Y\geq U_{j,h}(t_n\mid z)]$ with $E[(z-X_i^{(j)})^2\mid Y\geq U_{0}(t_n)]$.
\end{proof}

Let $\hat{r}_n(z)=n^{-1}\sum_{i=1}^n K((z-X_i^{(j)})/h)$, $r_n(z)=E[\hat{r}_n(z)]$, 
\[
\hat{G}_j(y\mid z) = \frac{1}{n}\sum_{i=1}^n  K\left(\frac{z-X_i^{(j)}}{h}\right)I(Y_i\geq y).
\]
and 
\[
G_{j,h}(y\mid z) = E[\hat{G}_j(y\mid z) ]
\]
We then obtain 
\[
\hat{S}_j(y\mid z) = \frac{\hat{G}_j(y\mid z) }{\hat{r}_n(z)}.
\]
Furthermore, it is easy to show that $r_n(z)=1+ O(h^{-2})$ and $V[\hat{r}_n(z)]=O((nh^2)^{-1})$, which implies that $r_n(z)=1+o(1)$ and $\hat{r}_n(z)=1+o_P(1)$. 
Since $r_n(z)$ is not dependent on $y$ and it converges to one, this term does not affect the asymptotic behavior of $\hat{U}_j(t\mid z)$ (see, proof of Lemma \ref{condQuantile}).  
We first derive the asymptotic tail bound of $\hat{G}_j$.

\begin{lemma}\label{condDist}
For $j=1,\ldots, p$ and for any $\varepsilon>0$, there exists a constant $C^*>0$ such that for any $z\in{\cal X}_j$, 
\[
P\left(t_n |\hat{G}_j(U_0(t_n)|z)-G_{j,h}(U_0(t_n)|z)|> \varepsilon\right)\leq  
2\exp\left[-C^*\frac{\varepsilon^2\frac{n}{t_n}}{1 + \varepsilon }\right]
\]
\end{lemma}

\begin{proof}[Proof of Lemma \ref{condDist}]
By the definition of  $\hat{G}(y\mid z)$, we obtain 
\begin{align*}
V[\hat{G}(y\mid z)] = \frac{1}{n}\left\{F_0(y)(1-F_0(y))+\frac{\{K^{\prime\prime}(0)\}^2}{h^2}(1-F_0(y))E[(z-X_i^{(j)})^2I(Y<y)](1+o(1))\right\}.
\end{align*}
When we put $y= U_0(t_n)$, we obtain $1-F_0(U_0(t_n))=1/t_n$ and 
\[
V[\hat{G}(y\mid z)] =\frac{1}{nt_n}F_0(U_0(t_n))(1-F_0(U_0(t_n)))=\frac{1}{nt_n}+o(1/(nt_n)). 
\]
Furthermore, we have
\[
\max_i \left| K\left(\frac{z-X_i^{(j)}}{h}\right)I(Y_i\geq y)\right|\leq 1.
\]
Therefore, from Lemma \ref{Bernstein}, for any $\varepsilon>0$, 
\begin{align*}
P\left(|\hat{G}(U_0(t_n)\mid z)-E[\hat{G}(U_0(t_n)\mid z)]|>\varepsilon\right)
&\leq 2\exp\left[-\frac{\varepsilon^2}{C/(nt_n) + 3^{-1}\varepsilon/n}\right]\\
&\leq 2\exp\left[-\frac{\varepsilon^2nt_n}{1 + \varepsilon t_n}\right].
\end{align*}
By replacing $\varepsilon$ with $\varepsilon/t_n$, we can prove this lemma.
\end{proof}

We let $U_{j,h}(t\mid z)$ be inverse function of $1/S_{j,h}(\cdot\mid z)$ for $j=1,\ldots,p$. 
That is, for $j=1,\ldots,p$ and $t>0$, 
\[
\frac{1}{S_{j,h}(U_{j,h}(t\mid z)\mid z)} =t,\ \ {\rm and}\ \ \frac{1}{G_{j,h}(U_{j,h}(t\mid z)\mid z)} =\frac{t}{r_n(z)}
\]

\begin{lemma}\label{condQuantile}
For $j=1,\ldots,p$, for any $\varepsilon>0$, there exists a constant $C^*>0$ such that
\[
P\left(\left|\frac{|\hat{U}_{j}(t_n\mid z)- U_{j,h}(t_n\mid z)|}{a_0(t_n)}\right|>\varepsilon\right)\leq 
2\exp\left[-C^*\frac{\varepsilon^2\frac{n}{t_n}}{1 + \varepsilon }\right]
\]
\end{lemma}

\begin{proof}[Proof of Lemma \ref{condQuantile}]
We first fix $\varepsilon>0$. 
We let $\theta_j(t)=U_{j,h}(t_n\mid z)+a_0(t_n)\varepsilon$.
By the definition of $\hat{F}_j$ and $\hat{U}_j$,  we have 
\begin{align*}
P\left(\frac{\hat{U}_{j}(t_n\mid z)- U_{j,h}(t_n\mid z)}{a_0(t_n)}>\varepsilon\right)
&=
P\left(\hat{U}_{j}(t_n\mid z)> \theta_j(t)\right)\\
&=
P\left(\hat{G}_j(\hat{U}_{j}(t_n\mid z)\mid z)< \hat{G}_j(\theta_j(t)\mid z) \right)\\
&=
P\left(\frac{r_n(z)}{t_n}-G_{j,h}(\theta_j(t)\mid z) < \hat{G}_j(\theta_j(t)\mid z) -G_{j,h}(\theta_j(t)\mid z) \right).
\end{align*}
from Lemmas \ref{condDistPopulation} and \ref{condQuantPopulation}, 
\begin{align*}
G_{j,h}(\theta_j(t)\mid z) 
&=
G_{j,h}(U_{j,h}(t_n\mid z)+a_0(t_n)\varepsilon\mid z)\\
&\sim G_{j,h}(U_0(t_n) +a_0(t_n)(\varepsilon +O(h^{-2})) \mid z)\\
&\sim \frac{r_n(z)}{t_n} -r_n(z)f_0(U_0(t_n))a_0(t_n)(\varepsilon +O(h^{-2}))+O(h^{-2}t_n^{-1})\\
&\sim \frac{r_n(z)}{t_n} - \frac{r_n(z)}{t_n}\varepsilon + O(h^{-2}t_n^{-1}).
\end{align*}
This and $r_n(z)=1+o(1)$ imply that 
\begin{align*}
\frac{r_n(z)}{t_n} -G_{j,h}(\theta_j(t)\mid z) = \frac{1}{t_n}\varepsilon(1+o(1)). 
\end{align*}
From this and Lemma \ref{condDist}, we have 
\[
P\left(\hat{G}_j(\theta_j(t)\mid z) -G_{j,h}(\theta_j(t)\mid z)\geq \frac{\varepsilon}{t_n} \right)
\leq 
\exp\left[-C^*\frac{\varepsilon^2\frac{n}{t_n}}{1 + \varepsilon }\right]
\]
for some constant $C^*>0$. 
Similarly, we obtain 
\[
P\left(\hat{G}_j(\theta_j(t)\mid z) -G_{j,h}(\theta_j(t)\mid z)\leq -\frac{\varepsilon}{t_n} \right)
\leq
\exp\left[-C^*\frac{\varepsilon^2\frac{n}{t_n}}{1 + \varepsilon }\right].
\]
This completes the proof.
\end{proof}

\begin{lemma}\label{condEVI}
For $j=1,\ldots,p$, for any $\varepsilon>0$, 
\[
P(|\hat{\gamma}_j(z)-\gamma_{j,k,h}(z)|>\varepsilon_n) \leq 
6\exp\left[-C^*\frac{\varepsilon^2k}{1 + \varepsilon }\right].
\]
\end{lemma}

\begin{proof}[Proof of \ref{condEVI}]
We first fix $j\in\{1,\ldots,p\}$, $z\in(0,1)$ and $\varepsilon_n>0$.
We then consider the event 
\[
{\cal U}_n =\left\{ \left|\frac{|\hat{U}_{j}(t_n\mid z)- U_{j,h}(t_n\mid z)|}{a_0(t_n)}\right|<\varepsilon_n, t_n\in\{n/k,n/(2k), n/(4k)\}\right\}.
\]
Then, from Lemma \ref{condQuantile}, we have
\[
P({\cal U}_n^c) \leq 6 \exp\left[-C^*\frac{\varepsilon_n^2n/t_n}{1+\varepsilon_n}\right].
\]
Let
\[
V_{j,n}(t\mid z) = \frac{\hat{U}_{j}(t_n\mid z)- U_{j,h}(t_n\mid z)}{a_0(t)}. 
\]
On the event ${\cal U}_n$, $|V_{j,n}(t\mid z)| \leq \varepsilon_n$.
We then obtain
\begin{align*}
&\hat{U}_{j}(n/k\mid z)- \hat{U}_{j}(n/(2k)\mid z)\\
&=
(U_{j,h}(n/k\mid z)- U_{j,h}(n/(2k)\mid z))
\left\{1+ \frac{a_0(n/k)V_{j,n}(n/k\mid z)- a_0(n/(2k))V_{j,n}(n/(2k)\mid z)}{U_{j,h}(n/k\mid z)- U_{j,h}(n/(2k)\mid z)}\right\}\\
&=
(U_{j,h}(n/k\mid z)- U_{j,h}(n/(2k)\mid z))
\left\{1+  \frac{\gamma_0}{1-2^{-\gamma_0}}V_{j,n}(n/k\mid z) - \frac{\gamma_0}{2^{\gamma_0}-1}V_{j,n}(n/(2k)\mid z) +O(h^{-2}\varepsilon_n)\right\}.
\end{align*}
Similarly, 
\begin{align*}
&\hat{U}_{j}(n/(2k)\mid z)- \hat{U}_{j}(n/(4k)\mid z)\\
&=
(U_{j,h}(n/(2k)\mid z)- U_{j,h}(n/(4k)\mid z))
\left\{1+  \frac{\gamma_0}{1-2^{-\gamma_0}}V_{j,n}(n/(2k)\mid z) - \frac{\gamma_0}{2^{\gamma_0}-1}V_{j,n}(n/(4k)\mid z) +O(h^{-2}\varepsilon_n)\right\}.
\end{align*}
Thus, on the event ${\cal U}_n$, 
we obtain 
\begin{align*}
\hat{\gamma}_j(z)
&=\gamma_{j,k,h}(z) 
+\frac{1}{\log 2}\log\left(\frac{1+  \frac{\gamma_0}{1-2^{-\gamma_0}}V_{j,n}(n/k\mid z) - \frac{\gamma_0}{2^{\gamma_0}-1}V_{j,n}(n/(2k)\mid z) +O(h^{-2}\varepsilon_n)}{1+  \frac{\gamma_0}{1-2^{-\gamma_0}}V_{j,n}(n/(2k)\mid z) - \frac{\gamma_0}{2^{\gamma_0}-1}V_{j,n}(n/(4k)\mid z) +O(h^{-2}\varepsilon_n)}\right)\\
&=
\gamma_{j,k,h}(z) 
+O_P(\varepsilon_n).
\end{align*}
Thus, there exists a constant $C>0$ such that $|\hat{\gamma}_{j}(z)-\gamma_{j,k,h}(z)|<C\varepsilon_n$.
Consequently, for a constant $c^*>C>0$, 
\begin{align*}
P(|\hat{\gamma}_{j}(z)-\gamma_{j,k,h}(z)|>c^*\varepsilon_n)
\leq 
P({\cal U}_n^c)\leq 
6 \exp\left[-C^*\frac{\varepsilon_n^2k}{1+\varepsilon_n}\right].
\end{align*}

\end{proof}

\begin{proof}[Proof of Theorem \ref{UtilityThm}]

We fix $j\in\{1,\ldots,p\}$ and $z\in{\cal X}_j$. 
Let 
\[
C_j(y\mid z) =\frac{K^{\prime\prime}(0)}{2}\left\{E[(z-X^{(j)})^2\mid Y>U_0(t)]-E[(z-X^{(j)})^2] \right\}.
\]
By the definition of $U_{j,h}(t\mid z)$, we have from (C1) and Lemma \ref{condQuantPopulation} that
\begin{align*}
&U_{j,h}(n/(k)\mid z)-U_{j,h}(n/(2k)\mid z)\\
&= U_{j,h}(n/k\mid z)-U_0(n/k)+U_0(n/k)- U_0(n/(2k)) +U_0(n/(2k))-U_{j,h}(n/(2k)\mid z)\\
&= a_0(n/k)\frac{U_{j,h}(n/k\mid z)-U_0(n/k)}{a_0(n/k)}
+ a_0(n/(2k))\frac{2^{\gamma_0}-1}{\gamma_0}+O(a_0(n/(2k))A_0(n/(2k)))\\
&\quad -a_0(n/(2k))\frac{U_{j,h}(n/(2k))-U_0(n/(2k))}{a_0(n/(2k))}\\
&=a_0(n/(2k))\frac{2^{\gamma_0}-1}{\gamma_0}+  h^{-2}a_0(n/k)C_j(n/k\mid z) \\
&\quad -h^{-2}a_0(n/(2k))C_j(n/(2k)\mid z)+o(a_0(n/k)h^{-2})\\
&=
a_0(n/(2k))\left\{\frac{2^{\gamma_0}-1}{\gamma_0}+  h^{-2}2^{\gamma_0}C_j(n/k\mid z) -h^{-2}C_j(n/(2k)\mid z)\right\}+o(a_0(n/k)h^{-2}).
\end{align*}
Similarly, we obtain 
\begin{align*}
&U_{j,h}(n/(2k)\mid z)-U_{j,h}(n/(4k)\mid z)\\
&=
a_0(n/(4k))\left\{\frac{2^{\gamma_0}-1}{\gamma_0}+  h^{-2}2^{\gamma_0}C_j(n/(2k)\mid z) -h^{-2}C_j(n/(4k)\mid z)\right\}+o(a_0(n/k)h^{-2})\\
&= 2^{-\gamma_0}
a_0(n/(2k))\left\{\frac{2^{\gamma_0}-1}{\gamma_0}+  h^{-2}2^{\gamma_0}C_j(n/(2k)\mid z) -h^{-2}C_j(n/(4k)\mid z)\right\}\\
&\quad +o(a_0(n/k)h^{-2}).
\end{align*}
Define 
\[
\bar{C}_j(k\mid z) = \frac{\gamma_0\{2^{\gamma_0}C_j(n/k\mid z)-C_j(n/(2k)\mid z)\}}{2^{\gamma_0}-1},\ j=1,\ldots,p.
\]
Thus, we have 
\begin{align*}
\gamma_{j,k,h}(z)
&= \frac{1}{\log 2}\log\left(\frac{U_{j,h}(n/k\mid z)-U_{j,h}(n/(2k)\mid z)}{U_{j,h}(n/(2k)\mid z)-U_{j,h}(n/(4k)\mid z)}\right)\\
&= 
\gamma_0+\frac{1}{\log 2}\log\left(\frac{1+ h^{-2}\bar{C}_j(k\mid z)+o(h^{-2}) }{1+ h^{-2}\bar{C}_j(2k\mid z)+o(h^{-2})}\right)\\
&=
\gamma_0 + \frac{1}{h^2\log 2}\{\bar{C}_j(k\mid z) -\bar{C}_j(2k\mid z)\}+o(h^{-2}).
\end{align*}
This implies that 
\[
h^4 d_{j,k,h} = \frac{1}{(\log 2)^2}\frac{1}{N}\sum_{i=1}^N \{\bar{C}_j(k\mid z_i) -\bar{C}_j(2k\mid z_i)\}^2+o(1).
\]
It is easy to show that 
\begin{align*}
& \frac{1}{(\log 2)^2}\{\bar{C}_j(k\mid z) -\bar{C}_j(2k\mid z)\}^2\\
&= \frac{\gamma_0^2\{K^{\prime\prime}(0)\}^2}{4((2^{\gamma_0}-1)\log 2)^2}\left\{2^{\gamma_0}B_{j,n}(k\mid z) -B_{j,n}(2k\mid z)\right\}^2.
\end{align*}
\end{proof}

\begin{proof}[Proof of Theorem \ref{SIS}]
For $j=1,\ldots,p$, we have\begin{align*}
&\hat{d}_j-d_{j,k,h}\\
&=
\frac{1}{N}\sum_{i=1}^N
\{\hat{\gamma}_j(z_i)-\hat{\gamma}_0\}^2
-
\frac{1}{N}\sum_{i=1}^N
\{\gamma_{j,k,h}(z_i)-\gamma_0\}^2\\
&=
\frac{1}{N}\sum_{i=1}^N
\{\hat{\gamma}_j(z_i)-\gamma_{j,k,h}(z_i)\}^2
+
(\hat{\gamma}_0-\gamma_0)^2+
\frac{2}{N}\sum_{i=1}^N
\{\hat{\gamma}_j(z_i)-\gamma_{j,k,h}(z_i)\}
\{\gamma_{j,k,h}(z_i)-\gamma_0\}\\
&\quad
-
\frac{2(\hat{\gamma}_0-\gamma_0)}{N}\sum_{i=1}^N
\{\gamma_{j,k,h}(z_i)-\gamma_0\}
-
\frac{2(\hat{\gamma}_0-\gamma_0)}{N}\sum_{i=1}^N
\{\hat{\gamma}_j(z_i)-\gamma_{j,k,h}(z_i)\}\\
&\equiv S_{j,n,1}+\cdots+S_{j,n,5}.
\end{align*}
We then obtain 
\[
P\left(h^4 \underset{1\leq j\leq p}{\max} |\hat{d}_j-d_{j,k,h}| > C^* k^{-\eta}\right)
\leq \sum_{j=1}^p \sum_{\ell=1}^5 P(h^4 |S_{j,n,\ell}|>C^* k^{-\eta}/5). 
\]
We now evaluate $P(h^4|S_{j,n,\ell}|>C k^{-\eta})$ for a constant $C=C^*/5$. 
The direct use of Lemma \ref{univariateEVI} indicates that 
\[
P(h^4|S_{j,n,2}|>C k^{-\eta})=P(|\hat{\gamma}_0-\gamma_0|>Ch^{-2}k^{-\eta/2})\leq 2\exp\left[-C_2 \frac{kk^{-\eta}h^{-4}}{1+k^{-\eta/2}h^{-2}}\right]\leq 2\exp\left[-C_2^* k^{1-\eta}h^{-4}\right]
\]
for some constants $C_2, C_2^*>0$. 
For $S_{j,n,1}$, we have from Lemma \ref{condEVI} that
\begin{align*}
P\left(h^4\left|\frac{1}{N}\sum_{i=1}^N
\{\hat{\gamma}_j(z_i)-\gamma_{j,k,h}(z_i)\}^2\right|>Ck^{-\eta}\right)
&\leq
\sum_{i=1}^N P(h^4\{\hat{\gamma}_j(z_i)-\gamma_{j,k,h}(z_i)\}^2>Ck^{-\eta})\\
&\leq
\sum_{i=1}^N P(|\hat{\gamma}_j(z_i)-\gamma_{j,k,h}(z_i)|>Ck^{-\eta/2}h^{-2})\\
&\leq
6N\exp\left[-C_1\frac{k^{-\eta}h^{-4} k}{1 + k^{-\eta}h^{-2} }\right]\\
&\leq
6N\exp\left[-C_1^* k^{1-\eta}h^{-4}\right].
\end{align*}
From proof of Theorem \ref{UtilityThm}, we obtain 
\[
\frac{1}{N}\sum_{i=1}^N h^4\{\gamma_{j,k,h}(z_i) -\gamma_0\}^2 < C_3
\]
for some constant $C_3>0$. 
Therefore, from Lemma \ref{condEVI} and Cauchy--Schwarz inequality,
\begin{align*}
P(h^4|S_{j,n,3}|>C k^{-\eta})
&\leq
P\left(h^4\left|\frac{1}{N}\sum_{i=1}^N
\{\hat{\gamma}_j(z_i)-\gamma_{j,k,h}(z_i)\}^2\right|>C^2k^{-2\eta}\right)\\
&\leq 6N\exp\left[-C_3^* k^{1-2\eta}h^{-4}\right].
\end{align*}
for some constant $C_3^*>0$. 
Similarly, for some constant $C_4^*>0$, we have 
\[
P(h^4|S_{j,n,4}|>C k^{-\eta})\leq 6N\exp\left[-C_4^* k^{1-2\eta}h^{-4}\right].
\]
Lastly, we have 
\begin{align*}
P(h^4|S_{j,n,5}|>C k^{-\eta})
&\leq
P(h^2|\hat{\gamma}_0-\gamma_0|>C^{1/2}k^{-\eta/2})\\
&\quad +
P\left(h^2\left|\frac{1}{N}\sum_{i=1}^N
\{\hat{\gamma}_j(z_i)-\gamma_{j,k,h}(z_i)\}\right| >C^{1/2}k^{-\eta/2}\right)\\
&\leq
(6N+2)\exp\left[-C_5^* k^{1-\eta}h^{-4}\right]
\end{align*}
for some constant $C_5^*>0$. 
Consequently, we have 
\[
P\left(h^4\underset{1\leq j\leq p}{\max} |\hat{d}_j-d_{j,k,h}|>C k^{-\eta}\right) 
\leq 
\exp\left[-C^* \{k^{1-2\eta}h^{-4} - \log(pN)\}\right]
\]
for some constant $C^*>0$. 

Since $0<\eta<1/2$, we have $k^{1-2\eta}\le k^{1-\eta}$ for large $k$. Hence all the above bounds are dominated by the slowest rate $\exp[-C^* k^{1-2\eta}h^{-4}]$.
Consequently, for some constants $C, C^*>0$,  
\begin{align*}
P\left(h^4 \underset{1\leq j\leq p}{\max} |\hat{d}_j-d_{j,k,h}| > C^* k^{-\eta}\right)
&\leq \sum_{j=1}^p \sum_{\ell=1}^5 P(h^4 |S_{j,n,\ell}|>C^* k^{-\eta}/5)\\
&\leq CpN \exp[-C^* k^{1-2\eta}h^{-4}]\\
&\leq \exp[-C^* \{k^{1-2\eta}h^{-4}- \log(pN)\}]
\end{align*}

\end{proof}

\begin{proof}[Proof of Theorem \ref{Screening}]
By the definition of $\widehat{{\cal M}}$, we have $P({\cal M}\subseteq\widehat{{\cal M}})\geq P(\min_{j\in{\cal M}} h^4 \hat{d}_j \geq \lambda_n)$. 
Since
\[
\underset{j\in{\cal M}}{\min} \hat{d}_j 
=
\underset{j\in{\cal M}}{\min} \{d_{j,k,h}+ \hat{d}_j- d_{j,k,h}\}
\geq 
\underset{j\in{\cal M}}{\min} d_{j,k,h} - \underset{j\in{\cal M}}{\max}| \hat{d}_j- d_{j,k,h}|,
\]
we have
\begin{align*}
P\left(\underset{j\in{\cal M}}{\min} h^4\hat{d}_j \geq \lambda_n\right)
&\geq 1- P\left(h^4 \underset{j\in{\cal M}}{\max} |\hat{d}_j- d_{j,k,h}| \geq  h^4\underset{j\in{\cal M}}{\min}d_{j,k,h}-  \lambda_n\right)\\
&\geq 1- P\left( h^4\underset{j\in{\cal M}}{\max} |\hat{d}_j- d_{j,k,h}| \geq (1-\delta)Ck^{-\eta}\right).
\end{align*}
By applying Theorem \ref{SIS}, we can prove this theorem. 
\end{proof}

\subsection*{Supplementary Information} The R code and data files used to reproduce the toy example in Section 2.4, the simulation study in Section 4, the real-data analysis in Section 5, and all figures presented in this paper are available at:
\url{https://shigau2013-my.sharepoint.com/:u:/g/personal/takuma-yoshida_biwako_shiga-u_ac_jp/IQAHzJ2-fqF1TpBwK3563IXwAbD7EZ2FeSELGeve1WieWHw?e=ir1clL}

\subsection*{Acknowledgements} 
The first author was financially supported by JSPS KAKENHI (Grant JP22K11935, JP23H03353, JP26K14729). 
The second author acknowledges funding from JSPS KAKENHI (Grant JP21K17715).

\subsection*{Declaration of Generative AI and AI-assisted technologies in the writing process}
During the preparation of this work, the authors used ChatGPT solely for language editing and proofreading. The tool was not used to generate scientific content, perform analyses, create figures or tables, or draw conclusions. After using this tool, the authors reviewed and edited the content as needed and take full responsibility for the content of the published article.

\def\bibname{Reference}

\end{document}